\theoremstyle{proposition}
\newtheorem{proposition}{Proposition}
\theoremstyle{remark}
\theoremstyle{definition}
\newcommand{\N}{\mathcal{N}}
\newcommand{\z}{\mathbf{z}}
\newcommand{\I}{\mathbf{I}}
\newcommand{\y}{\mathbf{y}}
\newcommand{\sigset}{\{\sigma_{j,m}^2\}_{j,m}}
\newcommand{\var}{\mathbf{P}}
\newcommand{\K}{\mathbf{K}}
\newcommand{\realz}{\mathbf{z}^{\Re}}
\newcommand{\imz}{\mathbf{z}^{\Im}}
\newcommand{\pow}{\psi}
\newcommand{\Pow}{\Psi}
\newcommand{\zvec}{\widetilde{\z}}
\newcommand{\E}{\mathbb{E}}
\newcommand{\first}{f_{1}(\Psi)}
\newcommand{\second}{f_{2}(\Psi)}
\DeclareMathOperator*{\argmin}{arg\,min}
\title{PLSO: A generative framework for decomposing nonstationary time-series into piecewise stationary oscillatory components}
\author[1]{Andrew H. Song} 
\author[2]{Demba Ba}
\author[3,4,5]{Emery N. Brown}
\affil[1]{%
    Electrical Engineering and Computer Science\\
    Massachusetts Institute of Technology\\
    Cambridge, Massachusetts, USA
}
\affil[2]{%
    School of Engineering and Applied Sciences\\
    Havard University\\
    Cambridge, Massachusetts, USA
}
\affil[3]{
	Picower Institute of Learning and Memory\\
	Massachusetts Institute of Technology\\
	Cambridge, Massachusetts, USA
}
\affil[4]{
	Institute of Medical Engineering and Sciences\\
	Massachusetts Institute of Technology\\
	Cambridge, Massachusetts, USA
}
\affil[5]{
	Department of Anesthesia, Critical Care and Pain Medicine\\
	Massachusetts General Hospital\\
	Boston, Massachusetts, USA
}
\begin{document}
\maketitle

\begin{abstract}
  To capture the slowly time-varying spectral content of real-world time-series, a common paradigm is to partition the data into approximately stationary intervals and perform inference in the time-frequency domain.  However, this approach lacks a corresponding nonstationary time-domain generative model for the entire data and thus, time-domain inference occurs in each interval separately. This results in distortion/discontinuity around interval  boundaries  and can consequently lead to erroneous inferences based on  any  quantities derived from the posterior, such as the phase. To  address  these  shortcomings,  we propose the Piecewise Locally Stationary Oscillation (PLSO) model  for  decomposing   time-series data  with  slowly time-varying  spectra into several oscillatory, piecewise-stationary  processes. PLSO,  as  a  nonstationary  time-domain  generative model, enables inference on the entire time-series without boundary  effects and simultaneously provides a characterization of its time-varying spectral properties. We also propose a novel two-stage inference algorithm that combines Kalman theory and an accelerated  proximal  gradient  algorithm.  We demonstrate  these  points  through  experiments  on simulated data and real neural data from the rat and the human brain.
\end{abstract}

\section{Introduction}
With the collection of long time-series now common, in areas such as neuroscience and geophysics, it is important to develop an \textit{inference} framework for data where the stationarity assumption is too restrictive. We restrict our attention to data 1) with spectral properties that change slowly over time and 2) for which decomposition into several oscillatory components is warranted for interpretation, often the case in electroencephalogram (EEG) or electrophysiology recordings. One can use bandpass filtering~\citep{Oppenheim2009} or methods such as the empirical mode decomposition~\citep{Huang98, daubechies11} for these purposes. However, due to the \textit{absence} of a generative model, these methods lack a framework for performing inference. Another popular approach is to perform inference in the time-frequency (TF) domain on the short-time Fourier transform (STFT) of the data, assuming stationarity within small intervals. This has led to a rich literature on inference in the TF domain, such as~\citet{Wilson08}. A drawback is that most of these methods focus on estimates for the power spectral density (PSD) and lose important phase information. To recover the time-domain estimates, additional algorithms are required~\citep{Griffin84}.

This motivates us to explore time-domain generative models that allow time-domain inference and decomposition into oscillatory components. We can find examples based on the stationarity assumption in the signal processing/Gaussian process (GP) communities. A superposition of stochastic harmonic oscillators, where each oscillator corresponds to a frequency band, is used in the processing of speech ~\citep{Cemgil2005} and neuroscience data~\citep{Matsuda17, Beck18}. In GP literature \citep{GP}, the spectral mixture (SM) kernel~\citep{Wilson2013, Wilkinson19} models the data as samples from a GP, whose kernel consists of the superposition of localized and frequency-modulated kernels. 

These time-domain models can be applied to nonstationary data by partitioning them into stationary intervals and performing time-domain inference within each interval. However, we are faced with a different kind of challenge. As the inference is localized within each interval, the time-domain estimates in different intervals are independent conditioned on the data and do not reflect the dependence across intervals. This also causes discontinuity/distortion of the time-domain estimates near the interval boundaries, and consequently any quantities derived from these estimates.

To address these shortcomings, we propose a generative framework for data with slow time-varying spectra, termed the Piecewise Locally Stationary Oscillation (PLSO) framework\footnote{Code is available at https://github.com/andrewsong90/plso.git}. The main contributions are:

\textbf{Generative model for piecewise stationary, oscillatory components} PLSO models time-series as the superposition of piecewise stationary, oscillatory components. This allows time-domain inference on each component and estimation of the time-varying spectra.

\textbf{Continuity across stationary intervals} The state-space model that underlies PLSO strikes a balance between ensuring time-domain continuity across piecewise stationary intervals and stationarity within each interval. Moreover, by imposing stochastic continuity on the interval-level, PLSO learns underlying smooth time-varying spectra accurately. 

\textbf{Inference procedure} We propose a two-stage inference procedure for the time-varying spectra and the time-series. By leveraging the Markovian dynamics, the algorithm combines Kalman filter theory~\citep{Kalman60} and inexact accelerated proximal gradient approach~\citep{Li15}.

In Section~\ref{section:background} we introduce necessary background, followed by the PLSO framework in Section~\ref{section:PSLO}. In Section~\ref{section:inference}, we discuss inference for PLSO. In Section~\ref{section:connection}, we discuss how PLSO relates to other frameworks. In Section~\ref{section:experiments}, we present experimental results and conclude in Section~\ref{section:concolusion}.

\section{Background }\label{section:background}
\subsection{Notation}
We use $j\in\{1, \ldots, J\}$ and $k\in\{1, \dots, K \}$ to denote frequency and discrete-time sample index, respectively. 
We use $\omega\in[-\pi,\pi]$ for normalized frequency. The $j^{\text{th}}$ latent process centered at $\omega=\omega_j$ is denoted as $\z_j\in\mathbb{C}^K$, with $\z_{j,k}\in\mathbb{C}$ denoting the $k^{\text{th}}$ sample of $\z_{j}$ and $\realz_{j,k}$, $\imz_{j,k}$ its real and imaginary parts. We also represent $\z_{j,k}$ as a $\mathbb{R}^2$ vector, $\zvec_{j,k}=[\realz_{j,k}, \imz_{j,k}]^{\text{T}}$. The elements of $\z_j$ are denoted as $\z_{j,k:k'}=[\z_{j,k},\ldots,\z_{j,k'}]^{\text{T}}$.  The state covariance matrix for $\z_{j,k}$ is defined as $\var_{k}^j=\E[\zvec_{j,k}\left(\zvec_{j,k}\right)^{\text{T}} ]$. To express an enumeration of variables, we use $\{\cdot\}$ and drop first/last index for simplicity, e.g. $\{\z_j\}_j$ instead of $\{\z_j\}_{j=1}^J$. 

We use $\y_k$ and $\z_{j,k}$ for the discrete-time counterpart of the continuous observation and latent process, $y(t)$ and $z_j(t)$. With the sampling frequency $f_s=1/\Delta$, we have $\y_k=y(k\Delta)$, $\z_{j,k}=z_j(k\Delta)$, and $T=K\Delta$.

\subsection{Piecewise local stationarity}
The concept of \textit{piecewise local stationarity} (PLS) for nonstationary time-series with slowly time-varying spectra~\citep{Adak98} plays an important role in PLSO. A stationary process has a constant mean and a covariance function which depends only on the difference between two time points.

For our purposes, it suffices to understand the following on PLS: 1) It includes local stationary~\citep{Priestley65,dahlhaus1997} and amplitude-modulated stationary processes. 2) A PLS process can be approximated as a piecewise stationary (PS) process (Theorem 1 of~\citep{Adak98})
\begin{equation}\label{eq:ps_adak}
z(t) = \sum_{m=1}^{M}\mathbf{1}(u_{m}\leq t < u_{m+1})\cdot z^{m}(t),
\end{equation}
where $z^{m}(t)$ is a continuous stationary process and the boundaries are $0= u_1 < \cdots < u_{M+1}= T$. Note that Eq.~\ref{eq:ps_adak} does not guarantee continuity across different PS intervals,
\begin{equation*}
\begin{split}
\lim_{t\rightarrow u^{-}_{m}}z(t)=\lim_{t\rightarrow u^{-}_{m}}z^{m-1}(t)\neq\lim_{t\rightarrow u^{+}_{m}}z^{m}(t)=\lim_{t\rightarrow u^{+}_{m}}z(t).\\
\end{split}
\end{equation*}

\section{The PLSO model and its mathematical properties}\label{section:PSLO}

Building on the Theorem 1 of~\citep{Adak98}, PLSO models nonstationary data as PS processes. It is a superposition of $J$ different PS processes $\{\z_j \}_j$, with $\z_j$ corresponding to an oscillatory process centered at frequency $\omega_j$. PLSO also guarantees stochastic continuity across PS intervals. We show that piecewise stationarity and continuity across PS intervals are two competing objectives and that PLSO strikes a balance between them, as discussed in Section~\ref{section:PLSO_nonstationary}. 

\begin{figure}[!ht]
	\centering
	\includegraphics[width=0.93\linewidth]{./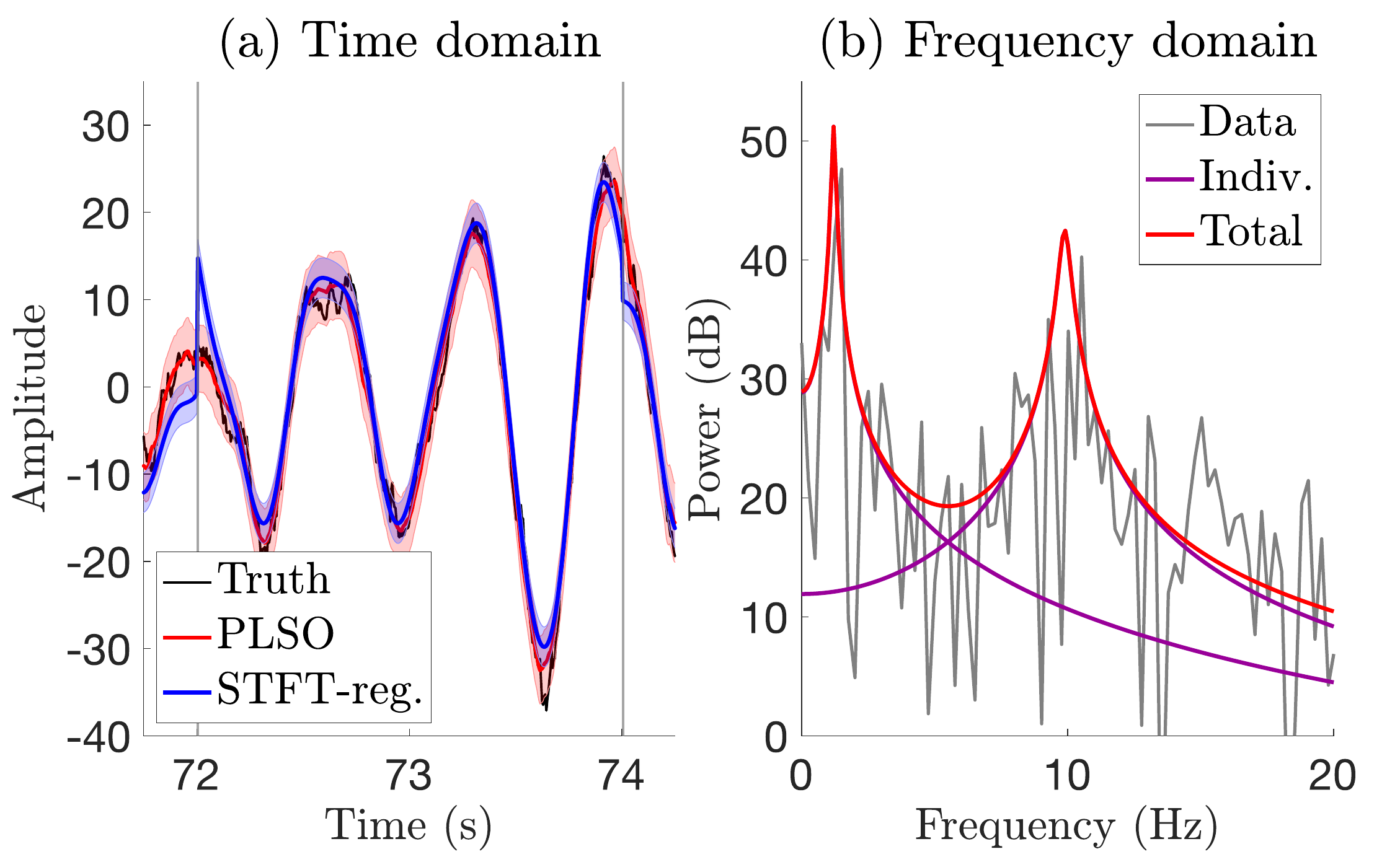}
	\caption{A simulated example. (a) Time domain. Data (black) around boundaries (gray) and inferred oscillatory components using PLSO (red) and regularized STFT (blue). (b) Frequency domain. Spectrum of the data (gray), PLSO components for $J=2$ (purple) and their sum (red).}
	\label{fig:example}
\end{figure}

Fig.~\ref{fig:example} shows an example of the PLSO framework applied to simulated data. In the time domain, the oscillation inferred using the regularized STFT (blue)~\citep{Kim18}, which imposes stochastic continuity on the STFT coefficients, suffers from discontinuity/distortion near window boundaries, whereas that inferred by PLSO (red) does not. In the frequency domain, each PLSO component corresponds to a localized spectrum $S_j(\omega)$, the sum of which is the PSD $\gamma(\omega)$, and is fit to the data STFT (or periodogram) $I(\omega)$. We start by introducing the PLSO model for a single window.

\subsection{PLSO for stationary data} As a building block for PLSO, we use the discrete stochastic harmonic oscillator for a stationary time series~\citep{Qi2002, Cemgil2005, Matsuda17}. The data $\y$ are assumed to be a superposition of $J$ \textit{independent} zero-mean components
\begin{equation}\label{eq:PSLO_stationary}
\begin{split}
\zvec_{j,k}&= \rho_j\mathbf{R}(\omega_j)\zvec_{j,k-1}+\varepsilon_{j, k}\ \\
\y_k&=\sum_{j=1}^J\realz_{j,k} +\nu_k, \\
\end{split}
\end{equation}
where $\mathbf{R}(\omega_j)=\begin{pmatrix}
\cos(\omega_j) & -\sin(\omega_j)\\
\sin(\omega_j) & \cos(\omega_j)\\
\end{pmatrix}$, $\varepsilon_{j,k}\sim \mathcal{N}\left(0,\alpha_j\I_{2\times 2}\right)$, and $\nu_{k}\sim \mathcal{N}(0,\sigma_{\nu}^2)$, correspond to the rotation matrix, the state noise, and  the observation noise, respectively. The imaginary component $\imz_{j,k}$, which does not directly contribute to $\y_k$, can be seen as the auxiliary variable to write $\z_j$ in recursive form using $\mathbf{R}(\omega_j)$~\citep{Koopman09}. We assume $\var_{1}^j=\sigma_j^2\cdot\mathbf{I}_{2\times 2}\,\,\,\forall j$. 

We reparameterize $\rho_j$ and $\alpha_j$, using lengthscale $l_j$ and power $\sigma^2_j$, such that $\rho_j=\exp(-\Delta/l_j)$ and $\alpha_j = \sigma_j^2 (1 - \rho_j^2)$. Theoretically, this establishes a connection to 1) the stochastic differential equation~\citep{Brown04, Solin14}, detailed in \textbf{Appendix A}, and 2) a superposition of frequency-modulated and localized GP kernels, similar to SM kernel~\citep{Wilson2013}. Practically, this ensures that $\rho_j<1$, and thus stability of the process.

Given Eq.~\ref{eq:PSLO_stationary}, we can readily express the frequency spectra of PLSO in each interval, through the autocovariance function. The autocovariance of $\z_j$ is given as $Q_j(n')=\E[\realz_{j,k}\realz_{j,k+n'} ] = \sigma_j^2\cos(\omega_j n')\exp\left(-n'\Delta /l_j\right)$. It can also be thought of as an \textit{exponential} kernel, frequency-modulated by $\omega_j$. The spectra for $\z_j$, denoted as $S_j(\omega)$, is obtained by taking the Fourier transform (FT) of $Q_j(n')$
\begin{equation*}
\begin{split}
S_j(\omega)&=\sum_{n'=-\infty}^{\infty}Q_j(n')\exp\left(-i\omega n'\right)=\varphi_j(\omega)+\varphi_j(-\omega)\\
\varphi_j(\omega)&=\frac{\sigma_j^2\left(1 - \exp\left(-2\Delta/l_j\right)\right)}{1+\exp\left(-2\Delta/l_j\right)-2\exp\left(-\Delta/l_j\right)\cos(\omega-\omega_j)},\\
\end{split}
\end{equation*}
with the detailed derivation in \textbf{Appendix B}.

Given $S_j(\omega)$, we can show that PSD $\gamma(\omega)$ of the entire process $\sum_{j=1}^J\z_j$ is simply $\gamma(\omega)=\sum_{j=1}^JS_j(\omega)$. First, since $\z_j$ is independent across $j$, the autocovariance can be simplified, i.e., $\E[\sum_j\realz_{j,k}\sum_j\realz_{j,k+n'} ]=\sum_j\E[\realz_{j,k}\realz_{j,k+n'} ]$. Next, using the linearity of FT, we can conclude that $\gamma(\omega)$ is a superposition of individual spectra.

\subsection{PLSO for nonstationary data}\label{section:PLSO_nonstationary}
If $\y$ is nonstationary, we can still apply stationary PLSO of Eq.~\ref{eq:PSLO_stationary} for the time-domain inference. However, this implies constant spectra for the entire data ($S_j(\omega)$ and $\gamma(\omega)$ do not depend on $k$), which is not suitable for nonstationary time-series for which we want to track spectral dynamics. This point is further illustrated in Section~\ref{section:experiments}.

We therefore segment $\y$ into $M$ non-overlapping PS intervals, indexed by $m\in\{1,\ldots, M\}$, of length $N$, indexed by $n\in\{1,\ldots, N\}$, such that $K=MN$. We then apply the stationary PLSO to each interval, with additional Markovian dynamics imposed on $\sigma_{j,m}^2$,
\begin{equation}\label{eq:tfgpss2}
\begin{split}
\log(\sigma_{j,m}^2)&=\log(\sigma_{j,m-1}^2)+\eta_{j,m}\\
\zvec_{j,mN+n}&= \rho_j\mathbf{R}(\omega_j)\zvec_{j,mN+(n-1)}+\varepsilon_{j, mN+n}\\
\y_{mN+n}&=\sum_{j=1}^J\realz_{j,mN+n}+\nu_{mN+n},\\
\end{split}
\end{equation}
where $\varepsilon_{j, mN+n}\sim\N(0, \sigma_{j,m}^2(1-\rho_j^2)\I_{2\times 2})$, $ \eta_{j,m}\sim \N(0,1/\sqrt{\lambda})$ and $\nu_{mN+n}\sim \N(0,\sigma_{\nu}^2)$. We define $\var_{m,n}^j$ as the covariance of $\zvec_{j,mN+n}$, with $\var_{1,1}^j = \sigma_{j,1}^2\I_{2\times 2},\forall j$. The graphical model is shown in Fig.~\ref{fig:graphical_model}.

\begin{figure}[!ht]
	\centering
	\includegraphics[width=0.9\linewidth]{./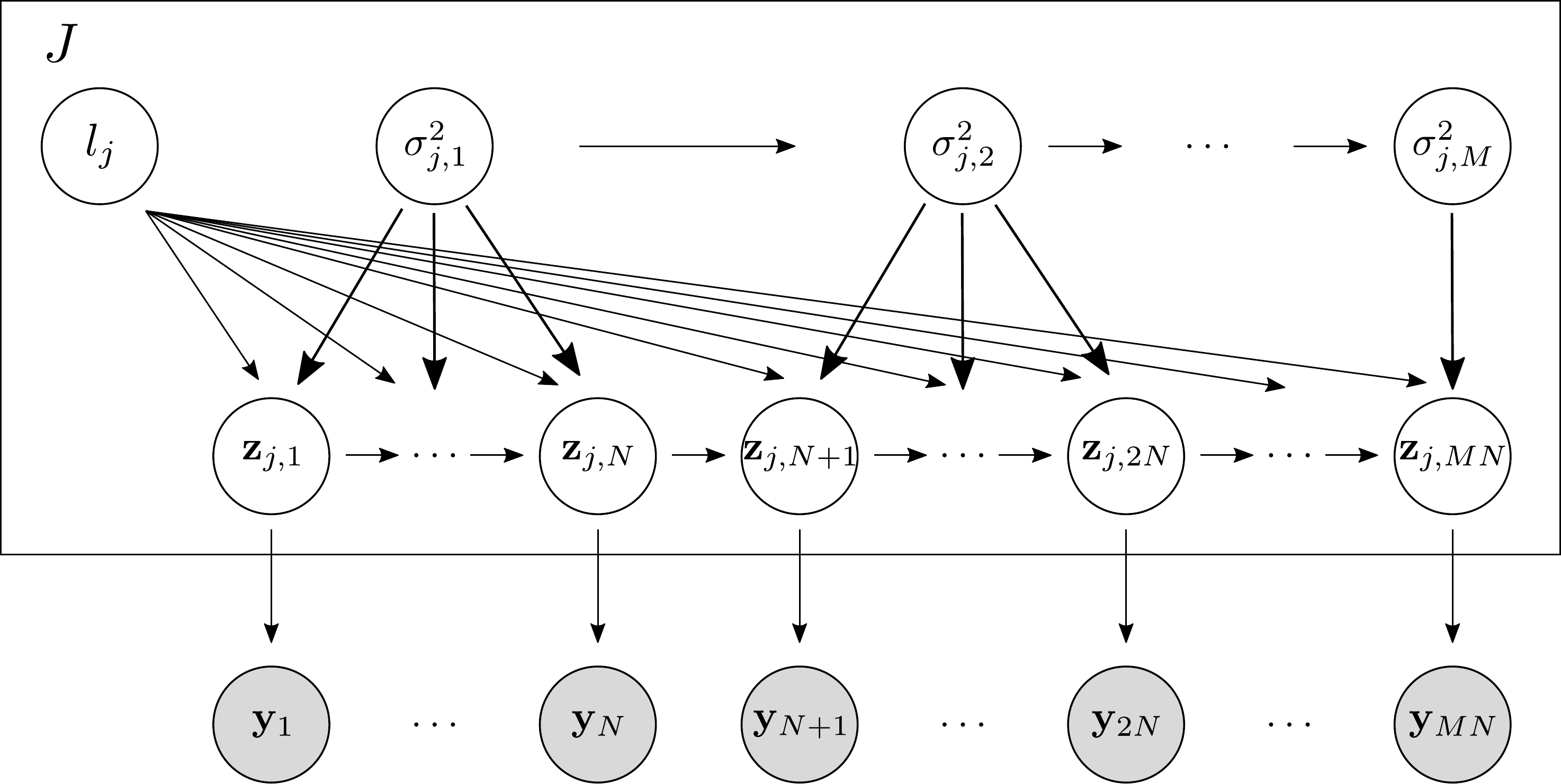}
	\caption{The graphical model for PLSO.}
	\label{fig:graphical_model}
\end{figure}

We can understand PLSO as providing a \textit{parameterized} spectrogram defined by $\theta=\{\lambda,\sigma_{\nu}^2, \{l_j\}_j, \{\omega_j\}_j \}$ and $\sigset$ of the time-domain generative model. The lengthscale $l_j$ controls the bandwidth of the $j^{\text{th}}$ process, with \textit{larger} $l_j$ corresponding to  \textit{narrower} bandwidth.  The variance $\sigma_{j,m}^2$ controls the power of $\z_j$ and changes across different intervals, resulting in time-varying spectra $S_j^{(m)}(\omega)$ and PSD $\gamma^{(m)}(\omega)$. The center frequency $\omega_j$, and $-\omega_j$, at which $S_j^{(m)}(\omega)$ is maximized, controls the modulation frequency.


As discussed previously, the segmentation approach for nonstationary time-series produces distortion/discontinuity artifacts around interval boundaries - The PLSO, as described by Eq.~\ref{eq:tfgpss2}, resolves these issues gracefully. We now analyze two mathematical properties, \textit{stochastic continuity} and \textit{piecewise stationarity}, to gain more insights on how PLSO accomplishes this.

\subsubsection{Stochastic continuity}
We discuss two types of stochastic continuity, 1) across the interval boundaries and 2) on $\{\sigma_{j,m}^2\}$.

\textbf{Continuity across the interval boundaries} In PLSO, the state-space model (Eq.~\ref{eq:tfgpss2}) provides stochastic continuity across different PS intervals. The following proposition rigorously explains stochastic continuity for PLSO.
\begin{proposition}\label{prop:continuity}
	For a given $m$, as $\Delta\rightarrow 0$, the samples on either side of the interval boundary, which are $\zvec_{j, (m+1)N}$ and $\zvec_{j, (m+1)N+1}$, converge to each other in mean square,
	\begin{equation*}
	\lim_{\Delta\rightarrow 0}\E[\Delta \zvec_{j, (m+1)N}\Delta \zvec_{j, (m+1)N}^{\text{T}}]=0,
	\end{equation*}
	where we use $\Delta \zvec_{j, (m+1)N}=\zvec_{j, (m+1)N+1}-\zvec_{j, (m+1)N}$.
\end{proposition}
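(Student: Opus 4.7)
The plan is to unroll one step of the state-space recursion at the interval boundary and reduce the statement to an elementary limit on the model parameters. Since $(m+1)N+1$ is the first sample of interval $m+1$, Eq.~\ref{eq:tfgpss2} gives
\begin{equation*}
\zvec_{j,(m+1)N+1} = \rho_j\mathbf{R}(\omega_j)\zvec_{j,(m+1)N} + \varepsilon_{j,(m+1)N+1},
\end{equation*}
with $\varepsilon_{j,(m+1)N+1}\sim\N(0,\sigma_{j,m+1}^2(1-\rho_j^2)\I_{2\times 2})$ independent of $\zvec_{j,(m+1)N}$. Subtracting $\zvec_{j,(m+1)N}$ from both sides yields $\Delta\zvec_{j,(m+1)N} = \mathbf{A}\,\zvec_{j,(m+1)N} + \varepsilon_{j,(m+1)N+1}$ with $\mathbf{A} := \rho_j\mathbf{R}(\omega_j) - \I_{2\times 2}$. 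Using zero mean and independence, the quantity of interest splits cleanly as
\begin{equation*}
\E[\Delta\zvec_{j,(m+1)N}\Delta\zvec_{j,(m+1)N}^{\text{T}}] = \mathbf{A}\,\var_{m,N}^j\,\mathbf{A}^{\text{T}} + \sigma_{j,m+1}^2(1-\rho_j^2)\I_{2\times 2},
\end{equation*}
so the proposition reduces to showing that each of the two summands vanishes as $\Delta\to 0$.

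Next, I would invoke the SDE correspondence of Appendix~A, under which $\rho_j = e^{-\Delta/l_j}$ and $\omega_j = \Omega_j\Delta$ for a fixed continuous-time angular frequency $\Omega_j$. A first-order Taylor expansion then gives $1-\rho_j^2 = 2\Delta/l_j + O(\Delta^2)$ and $\mathbf{A} = O(\Delta)$, since $\rho_j = 1 + O(\Delta)$ and $\mathbf{R}(\omega_j) = \I_{2\times 2} + O(\Delta)$. The state covariance $\var_{m,N}^j$ stays bounded in $\Delta$ by an easy induction on the $N$-step recursion: each update is a contraction by $\rho_j^2\le 1$ conjugated by the orthogonal matrix $\mathbf{R}(\omega_j)$, plus an $O(\Delta)$ noise contribution proportional to $\sigma_{j,m}^2$, so $\|\var_{m,N}^j\|$ is uniformly controlled by $\max_{m'\le m}\sigma_{j,m'}^2$. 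Both summands are therefore $O(\Delta)$, and the covariance of the increment converges to the zero matrix.

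The main obstacle is the interpretation of $\omega_j$ in the limit. A literal reading of Eq.~\ref{eq:tfgpss2} with $\omega_j$ held fixed would give $\mathbf{A}\to\mathbf{R}(\omega_j) - \I_{2\times 2}\neq 0$, and the proposition would simply fail: the rotational component of one step would persist no matter how finely we sample. The SDE-based identification $\omega_j = \Omega_j\Delta$ from Appendix~A is what resolves this, forcing the per-step rotation to shrink in tandem with the sampling interval; this is ultimately the mechanism that delivers mean-square continuity across the PS interval boundaries.
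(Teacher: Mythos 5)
Your proof is correct and follows essentially the same route as the paper's: both reduce the claim to showing that the one-step transition matrix tends to $\I_{2\times 2}$ and the one-step noise covariance tends to $\mathbf{0}$, invoking the continuous-time SDE correspondence of Appendix~A (the paper phrases this via $\lim_{\Delta\to 0}\exp(\mathbf{F}\Delta)=\I_{2\times 2}$ with $\mathbf{F}$ fixed, which is exactly your $\omega_j=\Omega_j\Delta$ scaling). Your version is somewhat more explicit than the paper's, usefully writing out the increment covariance as $\mathbf{A}\,\var_{m,N}^j\,\mathbf{A}^{\text{T}}+\sigma_{j,m+1}^2(1-\rho_j^2)\I_{2\times 2}$ and noting the boundedness of $\var_{m,N}^j$, which the paper's "this implies" step leaves implicit.
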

\begin{proof}We use the connection between PLSO, which is a discrete-time model, and its continuous-time counterpart. Details are in the \textbf{Appendix C}.
\end{proof}
This matches our intuition that as $\Delta\rightarrow 0$, the adjacent samples from the same process should coverge to each other. For PS approaches without the sample-level continuity, even with the interval-level constraint \citep{Rosen09, Kim18, Song18, Das18, Soulat19}, convergence is not guaranteed.

We can interpret the continuity in the context of posterior for $\z_j$. For PS approaches without continuity, we have
\begin{equation}\label{eq:conditional_posterior}
\begin{split}
&p(\left\{\z_j\right\}_j|\y)\\
&\propto \prod_{m=1}^M p(\left\{\z_{j,(m-1)N+1:mN}\right\}_j | \y_{(m-1)N+1:mN}),\\
\end{split}
\end{equation}
where $\sigset$ and $\theta$ are omitted for notational ease. This is due to $p(\{\z_j \}_j)=\prod_{m=1}^M p(\{\z_{j,(m-1)N+1:mN}\}_j)$, as a result of \textit{absence} of continuity across the intervals. Consequently, the inferred time-domain estimates are conditionally independent across the intervals. On the contrary, in PLSO, the time-domain estimates depend on the entire $\y$, not just on a subset. 

\textbf{Continuity on $\sigma_{j,m}^2$} For a given $j$, we impose stochastic continuity on $\log \sigma_{j,m}^2$. Effectively, this pools together estimates of $\{\sigma^2_{j,m}\}_{m}$ to 1) prevent overfitting to the noisy data spectra and 2) estimate smooth dynamics of $\{\sigma^2_{j,m}\}_{m}$. The use of $\log \sigma_{j,m}^2$ ensures that $\sigma_{j,m}^2$ is non-negative. 

The choice of $\lambda$ dictates the smoothness of $\{\sigma^2_{j,m}\}_m$, with the two extremes corresponding to the familiar dynamics. If $\lambda\rightarrow 0$, we treat each window  independently. If $\lambda\rightarrow \infty$, we treat the data as stationary, as the constraint forces $\sigma^2_{j,m}=\sigma_j^2$, $\forall m$. Practically, the smooth regularization prevents artifacts in the spectral analysis, arising from sudden motion or missing data, as demonstrated in Section~\ref{section:experiments}.


\subsubsection{Piecewise stationarity}
For the $m^{\text{th}}$ window to be piecewise stationary, the initial state covariance matrix $\var_{m,1}^j$ should be the \textit{steady-state} covariance matrix for the window, denoted as $\var_{m,\infty}^{j}$. 

The challenge is transitioning from $\var_{m,\infty}^{j}$ to $\var_{m+1,\infty}^{j}$. 
Specifically, to ensure $\var_{m+1,1}^{j}=\var_{m+1,\infty}^{j}$, given that $\var_{m,N}^{j}=\var_{m,\infty}^{j}\neq\var_{m+1,\infty}^{j}$, the \textit{variance }of the process noise between the two samples,  $\varepsilon_{j,(m+1)N+1}$, has to equal $\var_{m+1, \infty}^{j}-\exp\left(-2\Delta/l_j\right)\var_{m,\infty}^{j}$. However, this is infeasible. If $\var_{m+1,\infty}^{j}<\var_{m,\infty}^{j}$, the variance is negative. Even if it were positive, the limit as $\Delta\rightarrow 0$ does not equal zero, i.e., $\var_{m+1,\infty}^{j}-\var_{m,\infty}^{j}$. As a result, the Proposition~\ref{prop:continuity} no longer holds and the trajectory is discontinuous. 

In summary, there exists a \textit{trade-off} between maintaining piecewise stationarity and continuity across intervals. PLSO maintains continuity across the intervals while ensuring that the state covariance quickly transitions to the steady-state covariance. We quantify the speed of transition in the following proposition.

\begin{proposition}
	Assume $l_j \ll N\Delta$, such that $\var_{m,N}^j=\var_{m,\infty}^{j}$. In Eq.~\ref{eq:tfgpss2}, the difference between $\var_{m,\infty}^{j}$ and $\var_{m+1,\infty}^{j}$ decays exponentially fast as a function of $n$, 
	\begin{equation*}\label{eq:steady_state_induction}
	\begin{split}
	\var_{m+1,n}^j&=\var_{m+1,\infty}^{j} + \exp(-\frac{2n\Delta}{l_j})( \var_{m,\infty}^{j}-\var_{m+1,\infty}^{j}).\\
	\end{split}
	\end{equation*}
\end{proposition}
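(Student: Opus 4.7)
The plan is to unroll the covariance recursion in window $m+1$ and show it is a geometric sequence with ratio $\rho_j^2 = \exp(-2\Delta/l_j)$. First I will write down the one-step Lyapunov-type recursion obtained from the state equation of Eq.~\ref{eq:tfgpss2}. Using $\zvec_{j,(m+1)N+n} = \rho_j\mathbf{R}(\omega_j)\zvec_{j,(m+1)N+(n-1)} + \varepsilon_{j,(m+1)N+n}$, where $\varepsilon_{j,(m+1)N+n}$ is independent of $\zvec_{j,(m+1)N+(n-1)}$ with covariance $\sigma_{j,m+1}^2(1-\rho_j^2)\I$, taking second moments yields
\begin{equation*}
\var_{m+1,n}^j = \rho_j^2\,\mathbf{R}(\omega_j)\,\var_{m+1,n-1}^j\,\mathbf{R}(\omega_j)^{\mathrm{T}} + \sigma_{j,m+1}^2(1-\rho_j^2)\I.
\end{equation*}

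Second, I will observe that $\mathbf{R}(\omega_j)$ is orthogonal and that the process-noise covariance is a scalar multiple of the identity. Combined with the hypothesis $l_j \ll N\Delta$, which gives $\var_{m,N}^j = \var_{m,\infty}^j = \sigma_{j,m}^2\,\I$ as the boundary initialization, a one-line induction shows that $\var_{m+1,n}^j = a_n\,\I$ for some scalar $a_n$, since $\mathbf{R}(\omega_j)\,\I\,\mathbf{R}(\omega_j)^{\mathrm{T}} = \I$. The matrix recursion therefore reduces to the scalar linear recursion
\begin{equation*}
a_n = \rho_j^2\, a_{n-1} + \sigma_{j,m+1}^2\,(1-\rho_j^2), \qquad a_0 = \sigma_{j,m}^2.
\end{equation*}

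Third, I will identify the unique fixed point $a^\star = \sigma_{j,m+1}^2$ of this affine map, and note that $a^\star\,\I$ is precisely the steady-state covariance $\var_{m+1,\infty}^j$ of the new window. Subtracting $a^\star$ from both sides gives $a_n - a^\star = \rho_j^2\,(a_{n-1} - a^\star)$, whose closed form $a_n - a^\star = \rho_j^{2n}(a_0 - a^\star)$ yields, after multiplying by $\I$ and using $\rho_j^{2n} = \exp(-2n\Delta/l_j)$, exactly the stated identity
\begin{equation*}
\var_{m+1,n}^j = \var_{m+1,\infty}^j + \exp\!\left(-\tfrac{2n\Delta}{l_j}\right)\bigl(\var_{m,\infty}^j - \var_{m+1,\infty}^j\bigr).
\end{equation*}

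I do not anticipate a serious obstacle: the argument is essentially the observation that the per-window Lyapunov recursion is a contractive scalar map once the rotation is absorbed by the isotropy of the covariance. The one point requiring care is the boundary condition, namely that the process-noise variance in the very first step of window $m+1$ is governed by the new $\sigma_{j,m+1}^2$, while the preceding covariance $\var_{m,N}^j$ is still $\sigma_{j,m}^2\,\I$; the hypothesis $l_j \ll N\Delta$ is invoked precisely so that $\var_{m,N}^j$ equals $\var_{m,\infty}^j$ exactly, cleanly initializing the recursion and yielding the clean exponential decay.
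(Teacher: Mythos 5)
Your proposal is correct and follows essentially the same route as the paper: both reduce the matrix Lyapunov recursion to a scalar affine recursion by exploiting the orthogonality of $\mathbf{R}(\omega_j)$ and the isotropy of the covariance, identify $\sigma_{j,m+1}^2\I$ as the steady state, and obtain the exponential decay from the geometric contraction with ratio $\exp(-2\Delta/l_j)$. The only cosmetic difference is that you solve the recursion in closed form by subtracting the fixed point, whereas the paper verifies the same formula by induction on $n$.
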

\begin{proof} In \textbf{Appendix D}, we prove this result by induction.
	
\end{proof}
This implies that, except for the transition portion at the beginning of each window, we can assume stationarity. In practice, we additionally impose an upper bound on $l_j$ during estimation and also use a reasonably-large  $N$. Empirically, we observe that the transition period has little impact.

%

\section{Inference}\label{section:inference}
Given the generative model in Eq.~\ref{eq:tfgpss2}, our goal is to perform inference on the posterior distribution
\begin{equation}\label{eq:posterior}
\begin{split}
&p(\left\{\z_j\right\}_j,\sigset\mid\y, \theta)\\
&=\underbrace{p(\sigset\mid\y,\theta)}_{\text{window-level posterior}}\cdot \underbrace{p(\left\{\z_j\right\}_j\mid\sigset,\y,\theta)}_{\text{sample-level posterior}}.\\
\end{split}
\end{equation}
We can learn $\theta$ or fix the parameters to specific values informed by domain knowledge, such as the center frequency or bandwidth of the processes. The posterior distribution factorizes into two terms as in Eq.~\ref{eq:posterior}, the \textit{window-level} posterior $p(\sigset|\y,\theta)$ and the \textit{sample-level} posterior $p(\{\z_j\}_j|\sigset,\y,\theta)$. Accordingly, we break the inference into two stages. 

\textbf{Stage 1} We minimize the \textit{window-level} negative log-posterior, with respect to $\theta$ and $\sigset$. Specifically, we obtain maximum likelihood (ML) estimate $\hat{\theta}_{\text{ML}}$ and maximum a posteriori (MAP) estimate $\{\widehat{\sigma}^2_{j,m, \text{MAP}}\}_{j,m}$. We drop subscripts ML and MAP for notational simplicity.

\textbf{Stage 2} Given $\{\widehat{\sigma}^2_{j,m}\}_{j,m}$ and $\hat{\theta}$, we perform inference on the \textit{sample-level} posterior. This includes computing the mean $\widehat{\z}_j=\E[\z_j| \{\widehat{\sigma}^2_{j,m}\}_{j,m}, \y, \hat{\theta} ]$ and credible intervals, which quantifies the uncertainty of the estimates, or any statistical quantity derived from the posterior distribution.

\subsection{Optimization of $\sigset$ and $\theta$}
We factorize the posterior, $p(\sigset\mid\y,\theta)\propto p(\y \mid \sigset,\theta )\cdot p(\sigset\mid \theta )$. As the exact inference is intractable, we instead minimize the negative log-posterior, $-\log p(\sigset\mid\y,\theta)$. This is an \textit{empirical Bayes} approach~\citep{Casella85}, since we estimate $\sigset$ using the marginal likelihood $p(\y \mid \sigset,\theta )$. The smooth hyperprior provides the MAP estimate for $\sigset$.

We use the \textit{Whittle likelihood}~\citep{whittle1953}, defined for stationary time-series in the frequency domain, for the log-likelihood $f(\sigset;\theta)=\log p(\y\mid \sigset,\theta)$,
\begin{equation}
\begin{split}
&f(\sigset;\theta)\\
&=-\frac{1}{2}\sum_{m,n=1}^{M, N}\log(\gamma^{(m)}(\omega_n)+\sigma_{\nu}^2)+\frac{I^{(m)}(\omega_n)}{\gamma^{(m)}(\omega_n)+\sigma_{\nu}^2},\\
\end{split}
\end{equation}
where the log-likelihood is the sum of the Whittle likelihood computed for each interval, with discrete frequency $\omega_n=2\pi n/N,$ and data STFT $I^{(m)}(\omega_n)=\left|\sum_{n'=1}^N\exp\left(-2\pi i(n'-1)(n-1)/N\right)\y_{mN+n'}\right|^2.$
The Whittle likelihood, which is nonconvex, enables frequency-domain parameter estimation as a computationally more efficient alternative to the time domain estimation \citep{Turner2014}. The concave log-prior $g(\sigset;\theta)=\log p(\sigset\mid\theta)$, which arises from the continuity on $\sigset$, is given as
\begin{equation}\label{eq:logprior}
g(\sigset;\theta)=-\frac{\lambda}{2}\sum_{j=1}^J\sum_{m=1}^M \left(\log \sigma^2_{j,m} - \log \sigma^2_{j,m-1} \right)^2.
\end{equation}
This yields the following nonconvex problem
\begin{equation}\label{eq:penalized_whittle}
\begin{split}
&\min_{\sigset,\theta}\, -\log p\left( \sigset \mid \y,\theta\right)\\
&=\min_{\sigset,\theta}\, -f(\sigset;\theta) - g(\sigset;\theta). \\
\end{split}
\end{equation}

We optimize Eq.~\ref{eq:penalized_whittle} by block coordinate descent~\citep{Wright15} on $\sigset$ and $\{\sigma_{\nu}^2,\{l_j\}_j, \{\omega_j\}_j\}$. For $\sigma_{\nu}^2$, $\{l_j\}_j$, and $\{\omega_j\}_j$, we minimize $-f(\sigset;\theta)$, since $g(\sigset;\theta)$ does not affect them. 
We perform conjugate gradient descent on $\{l_j\}_j$ and $\{\omega_j\}_j$. 
We discuss initialization and how to estimate $\sigma_{\nu}^2$ in the \textbf{Appendix E}.

\subsubsection{Optimization of $\sigset$}
We introduce an algorithm to compute a local optimal solution of $\sigset$ for the nonconvex optimization problem in Eq.~\ref{eq:penalized_whittle}, by leveraging the regularized temporal structure of $\sigset$. It extends the inexact accelerated proximal gradient (APG) method~\citep{Li15}, by solving the proximal step with a Kalman filter/smoother~\citep{Kalman60}. This follows since computing the proximal operator for $g(\sigset;\theta)$ is equivalent to MAP estimation for $J$ independent 1-dimensional linear Gaussian state-space models
\begin{equation}\label{eq:proximal}
\begin{split}
&\{\log \sigma^{(l+1), 2}_{j,m}\}_{j,m}=\operatorname{prox}_{-\alpha^{(l)} g}(\mathbf{v}^{(l)})\\
&=\argmin_{\sigset}\underbrace{\frac{\sum_{j,m}^{J,M}(\mathbf{v}_{j,m}^{(l)} - \log \sigma^2_{j,m})^2}{2\alpha^{(l)}}-g(\sigset)}_{\sum_{j=1}^J q_j}\\
\end{split}
\end{equation}
where $\alpha^{(l)}>0$ is a step-size for the $l^{\text{th}}$ iteration, $q_j=\sum_{m=1}^M \frac{(\mathbf{v}^{(l)}_{j,m}- \log \sigma^2_{j,m})^2}{2\alpha^{(l)}}+\frac{\lambda}{2}(\log\sigma^2_{j,m}-\log\sigma^2_{j,m-1})^2$, and $\mathbf{v}^{(l)}_{j,m}=\log \sigma_{j,m}^{(l),2}+\alpha^{(l)}\frac{\partial f(\sigset)}{\partial\log\sigma^2_{j,m}}\Big|_{\{\sigma^{(l), 2}_{j,m}\}_{j,m}}$.

The $j^{\text{th}}$ optimization problem, $\min_{\{\sigma_{j,m}^2\}_m} q_j$, is equivalent to estimating the mean of the posterior for $\{\log \sigma^2_{j,m}\}_m$ in a linear Gaussian state-space model with observations $\{\mathbf{v}^{(l)}_{j,m}\}_m$, observation noise variance $\alpha^{(l)}$, and state variance $1/\lambda$. Therefore, the solution can efficiently be computed with $J$ 1-dimensional, Kalman filters/smoothers, with the computational complexity of $O(JM)$.

Note that Eq.~\ref{eq:proximal} holds for all non-negative $\lambda$. If $\lambda=0$, the proximal operator is an identity operator, as $\log \sigma^{(l+1),2}_{j,m}=\mathbf{v}^{(l)}_{j,m}$. This is a gradient descent with a step-size rule. If $\lambda\rightarrow \infty$, we have $\log \sigma^2_{j,m}=\log \sigma^2_{j,m-1}$, $\forall m$, which leads to $\log\sigma_{j,m}^{(l+1),2}=(1/M)\sum_{m=1}^M\mathbf{v}^{(l)}_{j,m}$.
The algorithm is guaranteed to converge when $\alpha^{(l)}< 1/C$, where $C$ is the Lipschitz constant for $f(\sigset;\theta)$.  In practice, we select $\alpha^{(l)}$ according to the step-size rule~\citep{barzilai88}. In \textbf{Appendix F \& G}, we present the full algorithm for optimizing $\sigset$ and a derivation for $C$.


\subsection{Inference for $p(\left\{\z_j\right\}_j\mid\sigset,\y,\theta)$}
We perform inference on the posterior distribution $p(\left\{\z_j\right\}_j\mid  \{\widehat{\sigma}_{j,m}^2\}_{j,m}, \y,\hat{\theta})$. Since this is a Gaussian distribution, the mean trajectories $\{\widehat{\z}_j\}_j$ and the credible intervals can be computed analytically. Moreover, Eq.~\ref{eq:tfgpss2} is a linear Gaussian state-space model, we can use Kalman filter/smoother for efficient computation with computational complexity $O(J^2K)$, further discussed In \textbf{Appendix H}. Since we use the point estimate $\{\widehat{\sigma}^2_{j,m}\}_{j,m}$, the credible interval for $\{\widehat{\z}_j\}_j$ will be \textit{narrower} compared to the full Bayesian setting which accounts for all values of $\sigset$.

\subsubsection{Monte Carlo Inference}
We can also obtain posterior samples and perform Monte Carlo (MC) inference on any posterior-derived quantity. To generate the MC trajectory samples, we use the forward-filter backward-sampling (FFBS) algorithm~\citep{Carter94}. Assuming $S$ number of MC samples, the computational complexity for FFBS is $O(SJ^2K)$, since for each sample, the algorithm uses Kalman filter/smoother for sampling. This is different from generating samples with the interval-specific posterior in Eq.~\ref{eq:conditional_posterior}. In the latter case, the FFBS algorithm is run $M$ times, the samples of which have to be concatenated to form an entire trajectory. With PLSO, the trajectory sample is conditioned on the entire observation and is continuous across the intervals.

One quantity of interest is the \textit{phase}. We obtain the phase as $\phi_{j,k}=\tan^{-1}(\imz_{j,k}/\realz_{j,k})$. Since $\tan^{-1}(\cdot)$ is a non-linear operation, we compute the mean and credible interval with MC samples through the FFBS algorithm. Given the posterior-sampled trajectories $\{\z_{j}^{(s)}\}_{j,s}$, where $s\in\{1,\ldots,S\}$ denotes MC sample  index, we estimate $\widehat{\phi}_{j,k}=(1/S)\sum_{s=1}^S\tan^{-1}(\z^{\Im, (s)}_{j,k}/\z^{\Re, (s)}_{j,k})$, and use empirical quantiles for the associated credible interval. 

\subsection{Choice of $J$ and $\lambda$}
We choose $J$ that minimizes the Akaike Information Criterion (AIC) \citep{akaike81}, defined as 
\begin{equation}
\operatorname{AIC}(J) = -(2/M)\cdot\log p( \y \mid \{\widehat{\sigma}_{j,m}^2\}_{j,m}, \hat{\theta}) +2\cdot 3\cdot J, 
\end{equation}	
where $3\cdot J$ corresponds to the number of parameters ($\{l_j\}_j, \{\omega_j\}_j, \{\sigma^2_{j,m}\}_j$). The regularization parameter $\lambda$ is determined through a two-fold cross-validation, where each fold is generated by aggregating even/odd sample indices \citep{Ba14}.

\subsection{Choice of window length $N$}
The choice of window length $N$ presents the tradeoff between 1) spectral resolution and 2) the temporal resolution of the spectral dynamics~\citep{Oppenheim2009}. For a shorter window, the estimated spectral dynamics have a \textit{finer} temporal resolution, \textit{coarser} spectral resolution, and \textit{higher} variance. For a longer window, these trends are reversed. This suggests that the choice is application-dependent. For electrophysiology data, a window on the order of seconds is used, as scientific interpretations are made on the basis of fine spectral resolution (< 1Hz). For audio signal processing~\citep{Gold11}, short windows ($10\sim 100$ ms) are used, since audio data is highly nonstationary and thus requires fine temporal resolution for processing. A survey of window lengths used in different applications can be found in the \textbf{supporting information} of \citet{Kim18}.


\section{Related works}\label{section:connection}
We examine how PLSO relates to other nonstationary frameworks.

\textbf{STFT/Regularized STFT} In STFT, the harmonic basis is used, whereas quasi-periodic components are used for PLSO, which allows capturing of broader spectral content. Recent works regularize STFT coefficients with stochastic continuity across the windows, to infer smooth spectral dynamics \citep{Ba14, Kim18}. However, this regularization leads to discontinuities at window boundaries.

\textbf{Piecewise stationary GP} GP regression and parameter estimation are performed within each interval \citep{Gramacy08, Solin18}. Consequently, the recovered trajectories are discontinuous. Also, the inversion of covariance matrix leads to an expensive inference. For example, the time-complexity of mean trajectory estimation is $O(MN^3)=O(N^2K)$, whereas the time-complexity for PLSO is $O(J^2K)$. Considering that the typical sampling frequency for electrophysiology data is $\sim 10^3$ (Hz) and windows are several seconds, which leads to $N\sim 10^3$, PLSO is computationally more efficient. In the \textbf{Appendix I}, we confirm this through an experiment.

\textbf{Time-varying Autoregressive model (TVAR)} 
The TVAR model~\citep{kitagawa85} is given as
\begin{equation*}
\y_k=\sum_{p=1}^P a_{p,k}\y_{k-p}+\varepsilon_k,
\end{equation*}	
with the time-varying coefficients $\{a_{p,k}\}_p$. Consequently, it does not suffer from discontinuity issues. TVAR can also be \textit{approximately} decomposed into oscillatory components via eigen-decomposition of the transition matrix \citep{West1999}. 
However, since the eigen-decomposition changes at every sample, this leads to an ambiguous interpretation of the oscillations in the data, as we discuss in Section~\ref{section:experiments}.

\textbf{RNN frameworks}
Despite the popularity of recurrent neural networks (RNN) for time-series applications~\citep{Goodfellow16}, we believe PLSO is more appropriate for time-frequency analysis for two reasons.
\begin{enumerate}
	\item
	RNNs operate in the time-domain with the goal of prediction/denoising and consequently less emphasis is placed on local stationarity or estimation of second-order statistics. Performing time-frequency analysis requires segmenting the RNN outputs and applying the STFT, which yields noisy spectral estimates.
	\item RNN is not a generative framework. Although variational framework can be combined with RNN~\citep{Chung15,Krishnan17}, the use of variational lower bound objective could lead to suboptimal results. On the other hand, PLSO is a generative framework that maximizes the true log-posterior.
\end{enumerate}

\section{Experiments}~\label{section:experiments}

We apply PLSO to three settings: 1) A simulated dataset 2) local-field potential (LFP) data from the rat hippocampus, and 3) EEG data from a subject under anesthesia.

We use PLSO with $\lambda=0$, $\lambda\rightarrow\infty$, and $\lambda$ determined by cross-validation, $\lambda_{\operatorname{CV}}$. We use interval lengths chosen by domain experts. As baselines, we use 1) regularized STFT (STFT-reg.) and 2) Piecewise stationary GP (GP-PS). For GP-PS, we use the same $\{\widehat{\sigma}_{j,m}^2\}_{j,m}$ and $\widehat{\theta}$ as PLSO with $\lambda=0$, so that the estimated PSD of GP-PS and PLSO are identical. This lets us explain differences in time-domain estimates by the fact that PLSO operates in the time-domain.

\subsection{Simulated dataset}

We simulate from the following model for $1\leq k \leq K$
\begin{equation*}
\y_k = 10\left(\frac{K-k}{K} \right)\realz_{1,k} + 10\cos^4(2\pi \omega_0 k)\realz_{2,k}+\nu_k,
\end{equation*}
where $\z_{1,k}$ and $\z_{2,k}$ are as in Eq.~\ref{eq:PSLO_stationary}, with $(\omega_0 ,\omega_1,\omega_2)=(0.04,1,10)$ Hz, $f_s=200$ Hz, $T=100$ seconds, $l_1=l_2=1$, and $\nu_k\sim\N(0,25)$. This stationary process comprises two amplitude-modulated oscillations, namely one modulated by a slow-frequency sinusoid and the other a linearly-increasing signal~\citep{Ba14}. We simulate 20 realizations and train on each realization, assuming 2-second PS intervals. For PLSO, we use $J=2$. Additional details are provided in the \textbf{Appendix J}.

\textbf{Results} We use two metrics: 1) Mean squared error (MSE) between the mean estimate $\hat{\z}_j$ and the ground truth $\z_j^{\text{True}}$ and 2) $\operatorname{jump}(\z_j) $. The averaged results are shown in Table~\ref{table:sim_result}. We define $\operatorname{jump}(\z_j) = \frac{1}{M-1}\sum_{m=1}^{M-1} \left|\hat{\z}_{j, mN+1} -\hat{\z}_{j, mN}\right|$ to be the level of discontinuity at the interval boundaries. If $\operatorname{jump}(\z_j) $ greatly exceeds $\operatorname{jump}(\z_j^{\text{True}})$, this implies the existence of large discontinuities at the boundaries.

\begin{figure}[!h]
	\centering
	\includegraphics[width=\linewidth]{./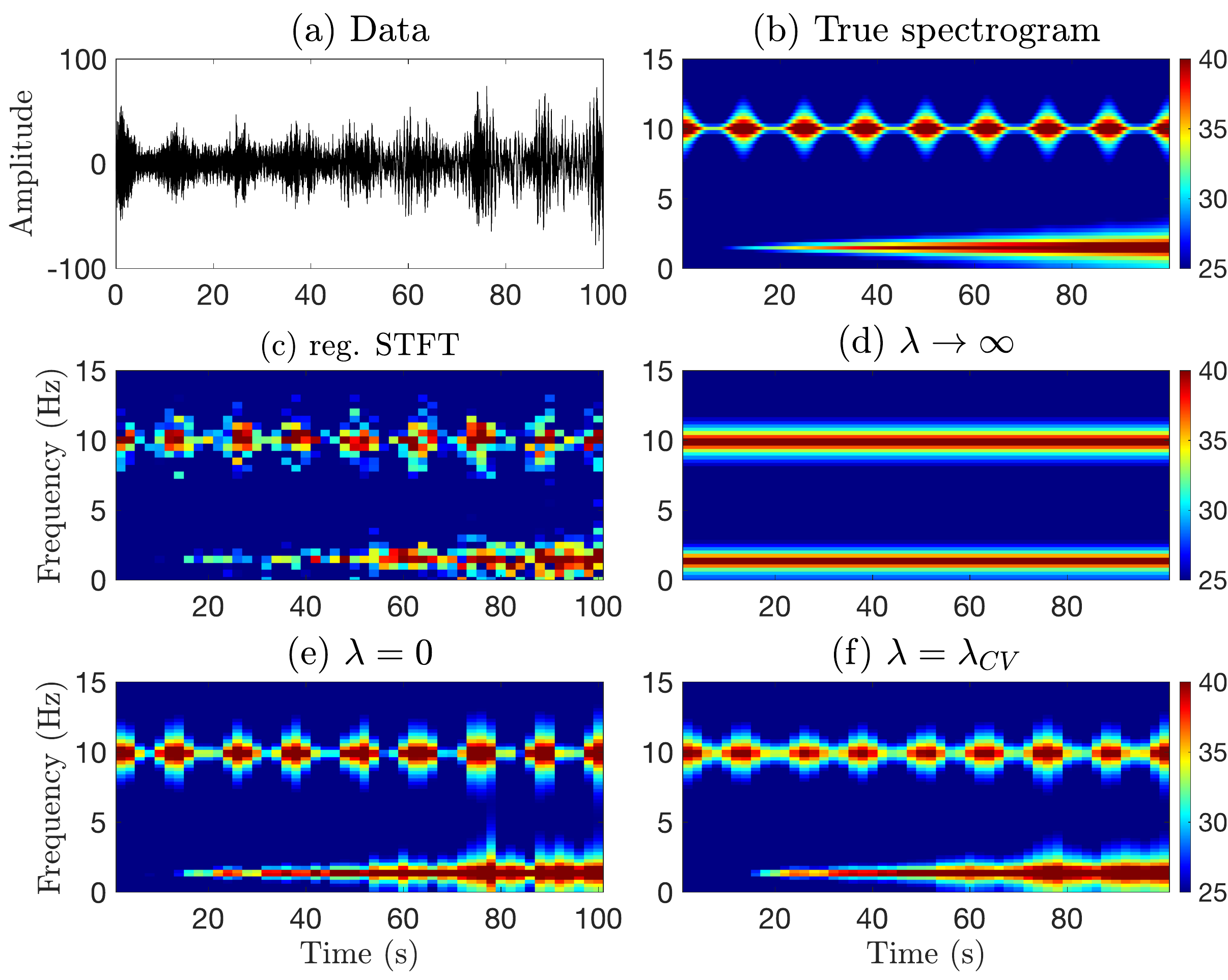}
	\caption{Spectrograms for simulation (in dB). (a) True data (b) True spectrogram (c) regularized STFT (d) PLSO with $\lambda\rightarrow \infty$ (e) PLSO with $\lambda=0$ (f) PLSO with $\lambda=\lambda_{\operatorname{CV}}$.}
	\label{fig:simulation}
\end{figure}

\begin{figure*}[ht!]
	\centering
	\includegraphics[width=0.95\linewidth]{./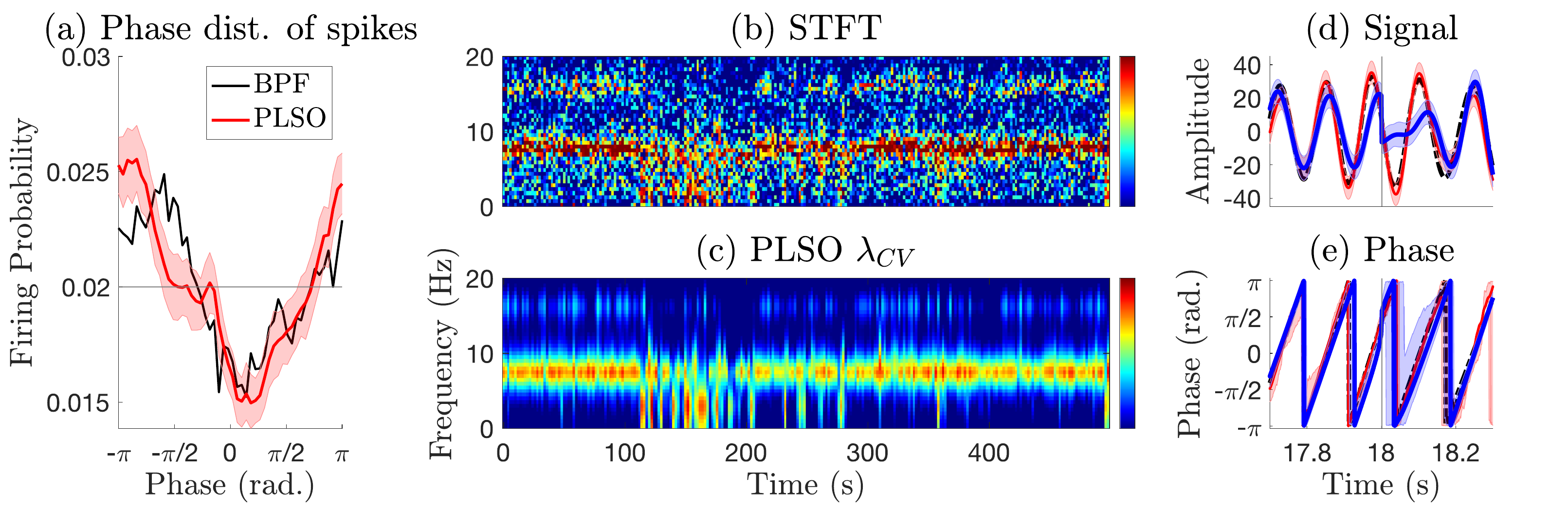}
	\caption{Result of analyses of hippocampal data. (a) Theta phase distribution of population neuron spikes, computed with bandpass-filtered LFP (black), PLSO estimate of $\hat{\z}_2$ with credible intervals estimated from 200 posterior samples (red). Horizontal gray line indicates the uniform distribution. (b-c) Spectrogram (in dB) for 500 seconds (b) STFT (c) PLSO with $\lambda_{\operatorname{CV}}$. Learned frequencies are $(\widehat{\omega}_1,\widehat{\omega}_2,\widehat{\omega}_3)=(2.99,7.62,15.92)$ Hz, with $\widehat{\omega}_4\sim\widehat{\omega}_5>25$ Hz. (d-e) Time-domain results. (d) Reconstructed signal (e) phase for $\widehat{\z}_2$ and interval boundary (vertical gray), with bandpass-filtered data (dotted black), STFT-reg. (blue), and PLSO (red). Shaded area represents 95\% credible interval from $S=200$ sample trajectories.}
	\label{fig:hc}
\end{figure*}

\begin{table}[h!]
	\caption{Simulation results. For $\operatorname{jump}(\z_j)$ and MSE, left/right metrics correspond to $\z_1$/$\z_2$, respectively.}
	\label{table:sim_result}
	\begin{center}
		\begin{tabular}{|c||c|c|c|c|}
			\hline
			& $\operatorname{jump}(\z_j)$ & MSE & IS div.\\
			\hline
			Truth & 0.95/12.11&0/0 & 0 \\
			$\lambda=0$ & 0.26/10.15& 2.90/3.92 & 4.08\\
			$\lambda\rightarrow \infty$ & 0.22/10.32&3.26/4.53&13.78\\
			$\lambda=\lambda_{\operatorname{CV}}$ & 0.25/10.21&\textbf{2.88}/\textbf{3.91}&\textbf{3.93}\\
			STFT-reg. & 49.59/81.00&6.89/10.68&N/A\\
			GP-PS & 16.99/23.28&3.00/4.04& 4.08\\
			\hline
		\end{tabular}
	\end{center}
\end{table}

Fig.~\ref{fig:simulation} shows the true data in the time domain and spectrogram results.  Fig.~\ref{fig:simulation}(c) shows that although the regularized STFT detects activities around $1$ and $10$ Hz, it fails to delineate the time-varying spectral pattern. Fig.~\ref{fig:simulation}(d) shows that PLSO with stationarity ($\lambda\rightarrow \infty$) assumption is too restrictive. Fig.~\ref{fig:simulation}(e), (f) show that both PLSO with independent window assumption ($\lambda=0$) and PLSO with cross-validated $\lambda$ ($\lambda=\lambda_{\operatorname{CV}}$) are able to capture the dynamic pattern, with the latter being more effective in recovering the smooth dynamics across different PS intervals.

For GP-PS and STFT-reg., $\operatorname{jump}(\z_j)$ exceeds $\operatorname{jump}(\z_j^{\text{True}})$, indicating discontinuities at the boundaries. An example is in Fig.~\ref{fig:example}. PLSO produces a similar jump metric as the ground truth metric, indicating the absence of discontinuities. We attribute the lower value to Kalman smoothing. For the TF domain, we use Itakura-Saito (IS) divergence~\citep{Itakura} as a distance measure between the ground truth spectra and the PLSO estimates. That the highest divergence is given by $\lambda\rightarrow\infty$ indicates the inaccuracy of the stationarity assumption. 

\subsection{LFP data from the rat hippocampus}
We use LFP data collected from the rat hippocampus during open field tasks~\citep{Mizuseki09}, with $T=1,600$ seconds and $f_s=1,250$ Hz\footnote{We use channel 1 of mouse ec013.528 for the LFP. The population spikes were simultaneously recorded.}. The theta neural oscillation band ($5\sim10$ Hz) is believed to play a role in coordinating the firing of neurons in the entorhinal-hippocampal system and is important for understanding the local circuit computation.

We fit PLSO with $J=5$, which minimizes AIC as shown in Table~\ref{table:hc}, with 2-second PS interval. The estimated $\widehat{\omega}_2$ is 7.62 Hz in the theta band. To obtain the phase for non-PLSO methods, we perform the Hilbert transform on the theta-band reconstructed signal. With no ground truth, we bandpass-filter (BPF) the data in the theta band for reference.

\begin{table}[!ht]
	\caption{AIC as a function of $J$ for Hippocampus data}
	\label{table:hc}
	\begin{center}
		\begin{tabular}{|c|c|c|c|c|c|c|}
			\hline
			J&1 & 2  & 3 & 4 & \textbf{5} & 6\\
			\hline
			AIC & 2882 & 2593 & 2566 & 2522 & \textbf{2503} & 2505\\
			\hline
		\end{tabular}
	\end{center}
\end{table}

\textbf{Spike-phase coupling} Fig.~\ref{fig:hc}(a) shows the theta phase distribution of population neuron spikes in the hippocampus. The PLSO-estimated distribution (red) confirms the original results analyzed with bandpass-filtered signal (black)~\citep{Mizuseki09}--the hippocampal spikes show a strong preference for a specific phase, $\pi$ for this dataset, of the theta band. Since PLSO provides posterior sample-trajectories for the entire time-series, we can compute as many realizations of the phase distribution as the number of MC samples. The resulting credible interval excludes the uniform distribution (horizontal gray), suggesting the statistical significance of strong phase preference. 

\textbf{Denoised spectrogram} Fig.~\ref{fig:hc}(b-c) shows the estimated spectrogram. We observe that PLSO denoises the spectrogram, while retaining sustained power at $\widehat{\omega}_2=7.62$ Hz and weaker bursts at $(\widehat{\omega}_1,\widehat{\omega}_3)= (2.99, 15.92)$ Hz. 

\textbf{Time domain discontinuity} Fig.~\ref{fig:hc}(d-e) show a segment of the estimated signal and phase near a boundary for $\widehat{\omega}_2$. While the estimates from STFT-reg. (blue) and PLSO (red) follow the BPF result closely, the STFT-reg. estimates exhibit discontiunity/distortion near the boundary. In Fig.~\ref{fig:hc}(e), the phase jump at the boundary is $38.4$ degrees. We also computed $\operatorname{jump}(\phi_2)$ in degrees/sample. Considering that the theta band roughly progresses $2.16\,(= 7.5 (\text{Hz}) \times 360/1250\,(\text{Hz}))$ degrees/sample, we observe that BPF (2.23), as expected, and PLSO ($\lambda_{\operatorname{CV}}:$ 2.40, $\lambda\rightarrow\infty$: 2.66) are not affected by the boundary effect. This is not the case for STFT-reg. (26.83) and GP-PS (25.91).


\begin{figure}[!ht]
	\centering
	\includegraphics[width=0.98\linewidth]{./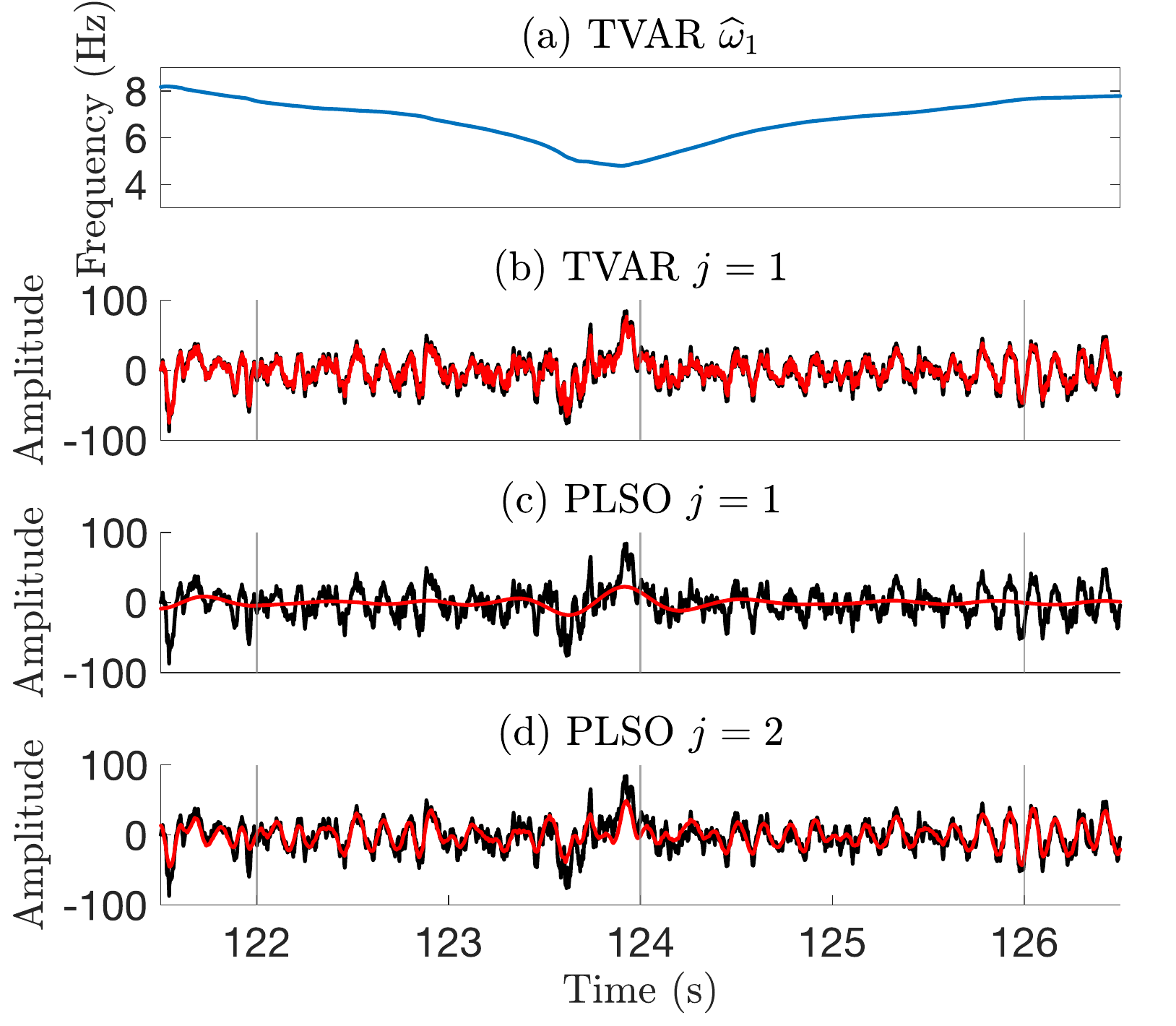}
	\caption{Hippocampus data. (a) time-varying $\widehat{\omega}_1$ for TVAR. (b-d) Inferred mean trajectory (red) for (b) TVAR $j=1$, (c) PLSO $j=1$, and (d) PLSO $j=2$, with raw data (black). }
	\label{fig:tvar}
\end{figure}

\textbf{Comparison with TVAR} Fig.~\ref{fig:tvar}(a-b) shows a segment of TVAR inference results\footnote{The details for TVAR is in the \textbf{Appendix K}.}. Specifically, Fig.~\ref{fig:tvar}(a) and (b) shows a time-varying center frequency $\widehat{\omega}_1$ and the corresponding reconstruction, for the lowest frequency component. Note that the eigenvalues, which correspond to $\{\omega_j\}_j$, and the eigenvectors, which are used for oscillatory decomposition, are derived from the \textit{estimated} TVAR transition matrix. Consequently, we cannot explicitly control $\{\omega_j\}_j$, as shown in Fig.~\ref{fig:tvar}(a), the bandwidth of each component, as well as the number of components $J$. This is further complicated by the fact that the transition matrix changes every sample.
Fig.~\ref{fig:tvar}(b) shows that this ambiguity results in the lowest-frequency component of TVAR explaining \textit{both} the slow ($0.1\sim 2$ Hz) and theta components. With PLSO, on the contrary, we can explicitly specify or learn the parameters. Fig.~\ref{fig:tvar}(c-d) demonstrates that PLSO is able to delineate the slow/theta components without any discontinuity.

\subsection{EEG data from the human brain under propofol anesthesia}
We apply PLSO to the EEG data from a subject anesthetized with propofol anesthetic drug, to assess whether PLSO can leverage regularization to recover smooth spectral dynamics, which is widely-observed during propofol-induced unconsciousness\footnote{The EEG recording is part of de-identified data collected from patients at Massachusetts General Hospital (MGH) as a part of a MGH
	Human Research Committee-approved protocol.}~\citep{purdon13}. The data last $T=2{,}300$ seconds, sampled at $f_s=250$ Hz. The drug infusion starts at $t=0$ and the subject loses consciousness around $t=260$ seconds. We use $J=6$ and assume a 4-second PS interval.

\begin{figure}[!ht]
	\centering
	\includegraphics[width=\linewidth]{./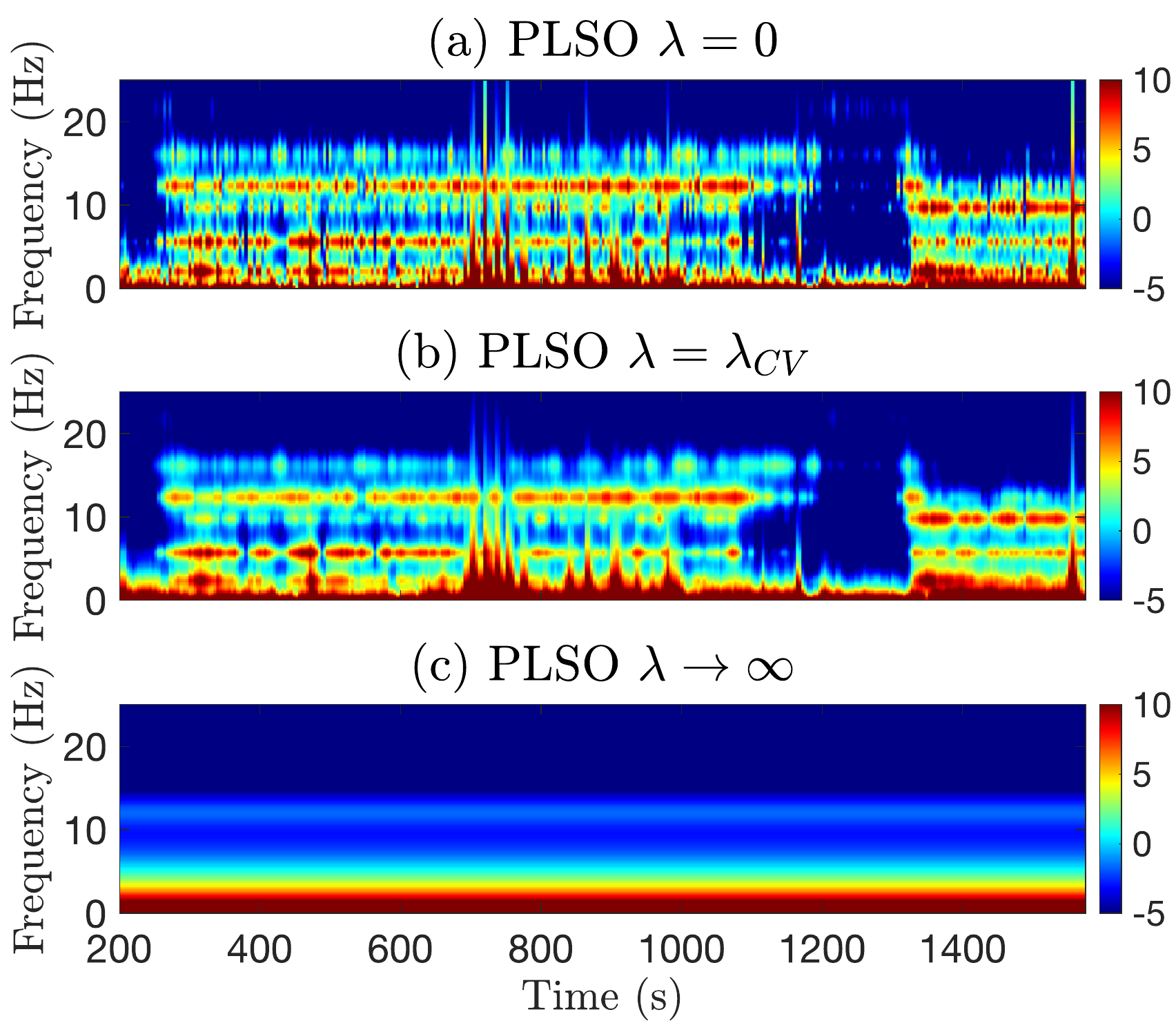}
	\caption{Spectrogram (in dB) under propofol anesthesia. PLSO with (a) $\lambda=0$ (b) $\lambda=\lambda_{\operatorname{CV}}$ (c) $\lambda\rightarrow\infty$.}
	\label{fig:anesthesia}
\end{figure}

\textbf{Smooth spectrogram} Fig.~\ref{fig:anesthesia}(a-b) shows a segment of the PLSO-estimated spectrogram with $\lambda=0$ and $\lambda=\lambda_{\operatorname{CV}}$. They identify strong slow ($0.1\sim 2$ Hz) and alpha oscillations ($8\sim 15$ Hz), both well-known signatures of propofol-induced unconsciousness. We also observe that the alpha band power diminishes between $1{,}200$ and $1{,}350$ seconds, suggesting that the subject regained consciousness before becoming unconscious again. 
 PLSO with $\lambda=0$ exhibits PSD fluctuation across windows, since $\sigset$ are estimated independently. The stationary PLSO ($\lambda\rightarrow \infty$) is restrictive and fails to capture spectral dynamics (Fig.~\ref{fig:anesthesia}(c)).  In contrast, PLSO with $\lambda_{\operatorname{CV}}$ exhibits smooth dynamics by pooling together estimates from the neighboring windows. The regularization also helps remove movement-related artifacts, shown as vertical lines in Fig.~\ref{fig:anesthesia}(a), around $700\sim 800$/$1{,}200$ seconds, and spurious power in $20\sim 25$ Hz band. In summary, PLSO with regularization enables smooth spectral dynamics estimation and spurious noise removal.

\section{Conclusion}\label{section:concolusion}
We presented the Piecewise Locally Stationary Oscillatory (PLSO) framework to model nonstationary time-series data with slowly time-varying spectra, as the superposition of piecewise stationary (PS) oscillatory components. PLSO strikes a balance between stochastic continuity of the data across PS intervals and stationarity within each interval. For inference, we introduce an algorithm that combines Kalman theory and nonconvex optimization algorithms. Applications to simulated/real data show that PLSO preserves time-domain continuity and captures time-varying spectra. Future directions include 1) the automatic identification of PS intervals and 2) the expansion to higher-order autoregressive models and diverse priors on the parameters that enforce continuity across intervals.

%
%


\begin{acknowledgements} 
    We thank Abiy Tasissa for helpful discussion on the algorithm and the derivations. This research is supported by NIH grant P01 GM118629 and support from
    the JPB Foundation. AHS acknowledges Samsung scholarship.
\end{acknowledgements}

\bibliography{stochastic}

\begin{thebibliography}{45}
\providecommand{\natexlab}[1]{#1}
\providecommand{\url}[1]{\texttt{#1}}
\expandafter\ifx\csname urlstyle\endcsname\relax
  \providecommand{\doi}[1]{doi: #1}\else
  \providecommand{\doi}{doi: \begingroup \urlstyle{rm}\Url}\fi

\bibitem[Adak(1998)]{Adak98}
S.~Adak.
\newblock Time-dependent spectral analysis of nonstationary time series.
\newblock \emph{Journal of the American Statistical Association}, 93\penalty0
  (444):\penalty0 1488--1501, 1998.

\bibitem[Akaike(1981)]{akaike81}
H.~Akaike.
\newblock Likelihood of a model and information criteria.
\newblock \emph{Journal of Econometrics}, 16\penalty0 (1):\penalty0 3--14,
  1981.

\bibitem[Ba et~al.(2014)Ba, Babadi, Purdon, and Brown]{Ba14}
D.~Ba, B.~Babadi, P.~L. Purdon, and E.~N. Brown.
\newblock Robust spectrotemporal decomposition by iteratively reweighted least
  squares.
\newblock \emph{Proceedings of the National Academy of Sciences}, 111\penalty0
  (50):\penalty0 E5336--E5345, 2014.

\bibitem[Barzilai and Borwein(1988)]{barzilai88}
J.~Barzilai and J.~M. Borwein.
\newblock {Two-Point Step Size Gradient Methods}.
\newblock \emph{IMA Journal of Numerical Analysis}, 8\penalty0 (1):\penalty0
  141--148, 01 1988.

\bibitem[{Beck} et~al.(2018){Beck}, {Stephen}, and {Purdon}]{Beck18}
A.~M. {Beck}, E.~P. {Stephen}, and P.~L. {Purdon}.
\newblock State space oscillator models for neural data analysis.
\newblock In \emph{2018 40th Annual International Conference of the IEEE
  Engineering in Medicine and Biology Society (EMBC)}, pages 4740--4743, 2018.

\bibitem[Brown et~al.(2004)Brown, Solo, Choe, and Zhang]{Brown04}
E.~N. Brown, V.~Solo, Y.~Choe, and Z.~Zhang.
\newblock Measuring period of human biological clock: infill asymptotic
  analysis of harmonic regression parameter estimates.
\newblock \emph{Methods in enzymology}, 383:\penalty0 382--405, 2004.

\bibitem[Carter and Kohn(1994)]{Carter94}
C.~K. Carter and R.~Kohn.
\newblock On gibbs sampling for state space models.
\newblock \emph{Biometrika}, 81\penalty0 (3):\penalty0 541--553, 1994.

\bibitem[Casella(1985)]{Casella85}
G.~Casella.
\newblock An introduction to empirical bayes data analysis.
\newblock \emph{The American Statistician}, 39\penalty0 (2):\penalty0 83--87,
  1985.

\bibitem[{Cemgil} and {Godsill}(2005)]{Cemgil2005}
A.~T. {Cemgil} and S.~J. {Godsill}.
\newblock Probabilistic phase vocoder and its application to interpolation of
  missing values in audio signals.
\newblock In \emph{2005 13th European Signal Processing Conference}, pages
  1--4, 2005.

\bibitem[Chung et~al.(2015)Chung, Kastner, Dinh, Goel, Courville, and
  Bengio]{Chung15}
J.~Chung, K.~Kastner, L.~Dinh, K.~Goel, A.~C. Courville, and Y.~Bengio.
\newblock A recurrent latent variable model for sequential data.
\newblock In \emph{Advances in Neural Information Processing Systems},
  volume~28. Curran Associates, Inc., 2015.

\bibitem[Dahlhaus(1997)]{dahlhaus1997}
R.~Dahlhaus.
\newblock Fitting time series models to nonstationary processes.
\newblock \emph{Ann. Statist.}, 25\penalty0 (1):\penalty0 1--37, 02 1997.

\bibitem[Das and Babadi(2018)]{Das18}
P.~Das and B.~Babadi.
\newblock Dynamic bayesian multitaper spectral analysis.
\newblock \emph{IEEE Transactions on Signal Processing}, 66\penalty0
  (6):\penalty0 1394--1409, 2018.

\bibitem[Daubechies et~al.(2011)Daubechies, Lu, and Wu]{daubechies11}
I.~Daubechies, J.~Lu, and H.~Wu.
\newblock Synchrosqueezed wavelet transforms: An empirical mode
  decomposition-like tool.
\newblock \emph{Applied and Computational Harmonic Analysis}, 30\penalty0
  (2):\penalty0 243 -- 261, 2011.

\bibitem[Gold et~al.(2011)Gold, Morgan, and Ellis]{Gold11}
B.~Gold, N.~Morgan, and D.~Ellis.
\newblock \emph{Speech and Audio Signal Processing: Processing and Perception
  of Speech and Music}.
\newblock Wiley-Interscience, USA, 2nd edition, 2011.
\newblock ISBN 0470195363.

\bibitem[Goodfellow et~al.(2016)Goodfellow, Bengio, and
  Courville]{Goodfellow16}
I.~Goodfellow, Y.~Bengio, and A.~Courville.
\newblock \emph{Deep Learning}.
\newblock MIT Press, 2016.
\newblock \url{http://www.deeplearningbook.org}.

\bibitem[Gramacy and Lee(2008)]{Gramacy08}
R.~B. Gramacy and H.~K.~H. Lee.
\newblock Bayesian treed gaussian process models with an application to
  computer modeling.
\newblock \emph{Journal of the American Statistical Association}, 103\penalty0
  (483):\penalty0 1119--1130, 2008.

\bibitem[{Griffin} and {J. Lim}(1984)]{Griffin84}
D.~{Griffin} and {J. Lim}.
\newblock Signal estimation from modified short-time fourier transform.
\newblock \emph{IEEE Transactions on Acoustics, Speech, and Signal Processing},
  32\penalty0 (2):\penalty0 236--243, 1984.

\bibitem[Huang et~al.(1998)Huang, Shen, Long, Wu, Shih, Zheng, Yen, Tung, and
  Liu]{Huang98}
N.~E. Huang, Z.~Shen, S.~R. Long, M.~C. Wu, H.~H. Shih, Q.~Zheng, N.~Yen, C.~C.
  Tung, and H.~H. Liu.
\newblock The empirical mode decomposition and the hilbert spectrum for
  nonlinear and non-stationary time series analysis.
\newblock \emph{Proceedings of the Royal Society of London. Series A:
  Mathematical, Physical and Engineering Sciences}, 454\penalty0
  (1971):\penalty0 903--995, 1998.

\bibitem[Itakura and Saito(1970)]{Itakura}
F.~Itakura and S.~Saito.
\newblock A statistical method for estimation of speech spectral density and
  formant frequencies.
\newblock 1970.

\bibitem[Kalman(1960)]{Kalman60}
R.E. Kalman.
\newblock A new approach to linear filtering and prediction problems.
\newblock \emph{Journal of Basic Engineering}, 1960.

\bibitem[Kim et~al.(2018)Kim, Behr, Ba, and Brown]{Kim18}
S.~Kim, M.~K. Behr, D.~Ba, and E.~N. Brown.
\newblock State-space multitaper time-frequency analysis.
\newblock \emph{Proceedings of the National Academy of Sciences}, 115\penalty0
  (1):\penalty0 E5--E14, 2018.

\bibitem[{Kitagawa} and {Gersch}(1985)]{kitagawa85}
G.~{Kitagawa} and W.~{Gersch}.
\newblock A smoothness priors time-varying ar coefficient modeling of
  nonstationary covariance time series.
\newblock \emph{IEEE Transactions on Automatic Control}, 30\penalty0
  (1):\penalty0 48--56, 1985.

\bibitem[Koopman and Lee(2009)]{Koopman09}
S.~J. Koopman and K.~M. Lee.
\newblock Seasonality with trend and cycle interactions in unobserved
  components models.
\newblock \emph{Journal of the Royal Statistical Society: Series C (Applied
  Statistics)}, 58\penalty0 (4):\penalty0 427--448, 2009.

\bibitem[Krishnan et~al.(2017)Krishnan, Shalit, and Sontag]{Krishnan17}
R.~G. Krishnan, U.~Shalit, and D.~Sontag.
\newblock Structured inference networks for nonlinear state space models.
\newblock In \emph{Proceedings of the Thirty-First AAAI Conference on
  Artificial Intelligence}, AAAI'17, page 2101–2109. AAAI Press, 2017.

\bibitem[Li and Lin(2015)]{Li15}
H.~Li and Z.~Lin.
\newblock Accelerated proximal gradient methods for nonconvex programming.
\newblock In \emph{Advances in Neural Information Processing Systems 28}, pages
  379--387. Curran Associates, Inc., 2015.

\bibitem[{Lindsten} and {Schön}(2013)]{Lindsten13}
F.~{Lindsten} and T.~B. {Schön}.
\newblock \emph{Backward Simulation Methods for Monte Carlo Statistical
  Inference}.
\newblock 2013.

\bibitem[Matsuda and Komaki(2017)]{Matsuda17}
T.~Matsuda and F.~Komaki.
\newblock Time series decomposition into oscillation components and phase
  estimation.
\newblock \emph{Neural Computation}, 29\penalty0 (2):\penalty0 332--367, 2017.

\bibitem[Mizuseki et~al.(2009)Mizuseki, Sirota, Pastalkova, and
  Buzsaki]{Mizuseki09}
K.~Mizuseki, A.~Sirota, E.~Pastalkova, and G.~Buzsaki.
\newblock Theta oscillations provide temporal windows for local circuit
  computation in the entorhinal-hippocampal loop.
\newblock \emph{Neuron}, 64:\penalty0 267--280, 2009.

\bibitem[Oppenheim et~al.(2009)Oppenheim, Schafer, and Buck]{Oppenheim2009}
A.~V. Oppenheim, R.~W. Schafer, and J.~R. Buck.
\newblock \emph{Discrete-time Signal Processing (3rd Ed.)}.
\newblock Prentice-Hall, Inc., 2009.

\bibitem[Priestley(1965)]{Priestley65}
M.~B. Priestley.
\newblock Evolutionary spectra and non-stationary processes.
\newblock \emph{Journal of the Royal Statistical Society. Series B
  (Methodological)}, 27\penalty0 (2):\penalty0 204--237, 1965.

\bibitem[Purdon et~al.(2013)Purdon, Pierce, Mukamel, Prerau, Walsh, Wong,
  Salazar-Gomez, Harrell, Sampson, Cimenser, Ching, Kopell, Tavares-Stoeckel,
  Habeeb, Merhar, and Brown]{purdon13}
P.~L. Purdon, E.~T. Pierce, E.~A. Mukamel, M.~J. Prerau, J.~L. Walsh, K.~F.~K.
  Wong, A.~F. Salazar-Gomez, P.~G. Harrell, A.~L. Sampson, A.~Cimenser,
  S.~Ching, N.~J. Kopell, C.~Tavares-Stoeckel, K.~Habeeb, R.~Merhar, and E.~N.
  Brown.
\newblock Electroencephalogram signatures of loss and recovery of consciousness
  from propofol.
\newblock \emph{Proceedings of the National Academy of Sciences}, 110\penalty0
  (12):\penalty0 E1142--E1151, 2013.

\bibitem[{Qi} et~al.(2002){Qi}, {Minka}, and {Picard}]{Qi2002}
Y.~{Qi}, T.~P. {Minka}, and R.~W. {Picard}.
\newblock Bayesian spectrum estimation of unevenly sampled nonstationary data.
\newblock In \emph{2002 IEEE International Conference on Acoustics, Speech, and
  Signal Processing}, volume~2, pages II--1473--II--1476, 2002.

\bibitem[Rasmussen and Williams(2005)]{GP}
C.~E. Rasmussen and C.~K.~I. Williams.
\newblock \emph{Gaussian Processes for Machine Learning (Adaptive Computation
  and Machine Learning)}.
\newblock The MIT Press, 2005.

\bibitem[Rosen et~al.(2009)Rosen, Stoffer, and Wood]{Rosen09}
O.~Rosen, D.~S. Stoffer, and S.~Wood.
\newblock Local spectral analysis via a bayesian mixture of smoothing splines.
\newblock \emph{Journal of the American Statistical Association}, 104\penalty0
  (485):\penalty0 249--262, 2009.

\bibitem[Solin and Särkkä(2014)]{Solin14}
A.~Solin and S.~Särkkä.
\newblock {Explicit Link Between Periodic Covariance Functions and State Space
  Models}.
\newblock In \emph{Proceedings of the Seventeenth International Conference on
  Artificial Intelligence and Statistics}, volume~33 of \emph{Proceedings of
  Machine Learning Research}, pages 904--912. PMLR, 2014.

\bibitem[Solin et~al.(2018)Solin, Hensman, and Turner]{Solin18}
A.~Solin, J.~Hensman, and R.~E Turner.
\newblock Infinite-horizon gaussian processes.
\newblock In \emph{Advances in Neural Information Processing Systems 31}, pages
  3486--3495. 2018.

\bibitem[{Song} et~al.(2018){Song}, {Chakravarty}, and {Brown}]{Song18}
A.~H. {Song}, S.~{Chakravarty}, and E.~N. {Brown}.
\newblock A smoother state space multitaper spectrogram.
\newblock In \emph{2018 40th Annual International Conference of the IEEE
  Engineering in Medicine and Biology Society (EMBC)}, pages 33--36, July 2018.

\bibitem[Soulat et~al.(2019)Soulat, Stephen, Beck, and Purdon]{Soulat19}
H.~Soulat, E.~P. Stephen, A.~M. Beck, and P.~L. Purdon.
\newblock State space methods for phase amplitude coupling analysis.
\newblock \emph{bioRxiv}, 2019.

\bibitem[{Turner} and {Sahani}(2014)]{Turner2014}
R.~E. {Turner} and M.~{Sahani}.
\newblock Time-frequency analysis as probabilistic inference.
\newblock \emph{IEEE Transactions on Signal Processing}, 62\penalty0
  (23):\penalty0 6171--6183, 2014.

\bibitem[West et~al.(1999)West, Prado, and Krystal]{West1999}
M.~West, R.~Prado, and A.~D. Krystal.
\newblock {Evaluation and Comparison of EEG Traces: Latent Structure in
  Nonstationary Time Series}.
\newblock \emph{Journal of the American Statistical Association}, 94\penalty0
  (448):\penalty0 1083--1095, 1999.

\bibitem[Whittle(1953)]{whittle1953}
P.~Whittle.
\newblock Estimation and information in stationary time series.
\newblock \emph{Ark. Mat.}, 2\penalty0 (5):\penalty0 423--434, 08 1953.

\bibitem[{Wilkinson} et~al.(2019){Wilkinson}, {Riis Andersen}, {Reiss},
  {Stowell}, and {Solin}]{Wilkinson19}
W.~J. {Wilkinson}, M.~{Riis Andersen}, J.~D. {Reiss}, D.~{Stowell}, and
  A.~{Solin}.
\newblock Unifying probabilistic models for time-frequency analysis.
\newblock In \emph{2019 IEEE International Conference on Acoustics, Speech and
  Signal Processing (ICASSP)}, pages 3352--3356, May 2019.

\bibitem[Wilson and Adams(2013)]{Wilson2013}
A.~G. Wilson and R.~P. Adams.
\newblock Gaussian process kernels for pattern discovery and extrapolation.
\newblock In \emph{Proceedings of the 30th International Conference on
  International Conference on Machine Learning - Volume 28}, ICML’13, page
  III–1067–III–1075, 2013.

\bibitem[{Wilson} et~al.(2008){Wilson}, {Raj}, {Smaragdis}, and
  {Divakaran}]{Wilson08}
K.~W. {Wilson}, B.~{Raj}, P.~{Smaragdis}, and A.~{Divakaran}.
\newblock Speech denoising using nonnegative matrix factorization with priors.
\newblock In \emph{2008 IEEE International Conference on Acoustics, Speech and
  Signal Processing}, pages 4029--4032, 2008.

\bibitem[Wright(2015)]{Wright15}
S.~J. Wright.
\newblock Coordinate descent algorithms.
\newblock \emph{Mathematical Programming}, 151:\penalty0 3--34, 2015.

\end{thebibliography}

\clearpage
\section*{Appendix}
References to the sections in the section headings are made with respect to the sections in the main text. Below is the table of contents for the Appendix.

\begin{itemize}
	\item[A.] Continuous model interpretation of PLSO \textit{(Section 3)}
	\item[B.] PSD for complex AR(1) process
	\item[C.] Proof for Proposition 1 \textit{(Section 3.2.1)}
	\item[D.] Proof for Proposition 2 \textit{(Section 3.2.2)}
	\item[E.] Initialization for PLSO \textit{(Section 4)}
	\item[F.] Optimization for $\sigset$ via proximal gradient update \textit{(Section 4.1.1)}
	\item[G.] Lipschitz constant for $\nabla f(\sigset;\theta)$ \textit{(Section 4.1.1)}
	\item[H.] Inference with $p(\left\{\z_j\right\}_j\mid\sigset,\y,\theta)$ \textit{(Section 4.2)}
	\item[I.] Computational efficiency of PLSO vs. GP-PS 
	\item[J.] Simulation experiment \textit{(Section 5.1)}
	\item[K.] Details of TVAR model \textit{(Section 5.2)}
	\item[L.] Anesthesia EEG dataset \textit{(Section 5.3)}
	\end{itemize}

\section*{A. Continuous model interpretation of PLSO}
We can establish the equivalent continuous model of the PLSO in Eq.~\ref{eq:PSLO_stationary}, using stochastic different equation 
\begin{equation}\label{eq:continuous_PLSO}
\frac{d\tilde{z}_j(t)}{dt} = \underbrace{\left(\left(-\frac{1}{l_j}\right)\oplus \begin{pmatrix}
	0 & -\omega_j\\
	\omega_j & 0\\
	\end{pmatrix}\right)}_{\mathbf{F}}\tilde{z}_j(t) +\varepsilon(t),
\end{equation}
where $\tilde{z}_j(t): \mathbb{R}\rightarrow \mathbb{R}^2$, $\oplus$ denotes the Kronecker sum and $\varepsilon(t)\sim\N(0,\sigma_j^2\I_{2\times 2})$. Discretizing the solution of Eq.~\ref{eq:continuous_PLSO} at $\Delta$, such that $\zvec_{j,k}=\tilde{z}_j(k\Delta)$, yields Eq.~\ref{eq:PSLO_stationary}. Consequently, we obtain the following for $\Delta>0$
\begin{equation}\label{eq:matexp}
\begin{split}
&\exp\left(\mathbf{F}\Delta\right)=\exp(-\Delta/l_j)\mathbf{R}(\omega_j),\\
&\sigma_j^2\int_0^{\Delta}\exp\left(\mathbf{F}(\Delta-\tau)\right)\exp\left(\mathbf{F}(\Delta-\tau)\right)^{\text{T}}d\tau\\
&\quad\quad=\sigma_j^2\left(1-\exp\left(-2\Delta/l_j\right)\right)\I_{2\times 2}.\\
\end{split}
\end{equation}

This interpretation extends to the nonstationary PLSO. The corresponding continuous model for $\zvec_{j,mN+n}$ in Eq.~\ref{eq:tfgpss2} is the same as Eq.~\ref{eq:continuous_PLSO}, with different variance $\E[\varepsilon_j(t)\varepsilon_j^{\text{T}}(t)]=\sum_{m=1}^M\sigma_{j,m}^2\cdot\mathbf{1}\left(\left(\frac{m-1}{M}\right)T\leq t <\left(\frac{m}{M}\right)T\right)\I_{2\times 2}.$

\section*{B. PSD for complex AR(1) process}
We compute the steady-state covariance denoted as $\var_{\infty}^{j}$. Since we assume $\var_{1}^j=\sigma_{j}^2\I_{2\times 2}$, it is easy to show that $\var_{k}^j$ is a diagonal matrix from $\mathbf{R}(\omega_j)\mathbf{R}^{\text{T}}(\omega_j)=\I_{2\times 2}$. Denoting $\var_{\infty}^{j}=\alpha\I_{2\times 2}$, we use the discrete Lyapunov equation
\begin{equation}
\begin{split}
&\var_{\infty}^{j} \\
&= \exp(-2\Delta/l_j)\mathbf{R}(\omega_j)\var_{\infty}^{j}\mathbf{R}^{\text{T}}(\omega_j)\\
&\quad+\sigma_{j}^2\left(1-\exp\left(-2\Delta/l_j\right)\right)\I_{2\times 2}\\
&\Rightarrow \alpha = \exp(-2\Delta/l_j)\alpha+\sigma_{j}^2\left(1-\exp\left(-2\Delta/l_j\right)\right)\\
&\Rightarrow \var_{\infty}^{j} =\sigma_{j}^2\I_{2\times 2},\\
\end{split}
\end{equation}
which implies that under the assumption $\var_{1}^j=\sigma_{j}^2\I_{2\times 2}$, we are guaranteed $\var_{k}^j=\sigma_{j}^2\I_{2\times 2}$, $\forall k$. To compute the PSD of the stationary process $\z_j$, we now need to compute the autocovariance. Since only $\realz_{j,k}$ contributes to $\y_k$, we compute the autocovariance of $\E[ \realz_{j,k}\realz_{j,k+n}]$ as
\begin{equation}\label{eq:autocovariance}
\begin{split}
\E[ \realz_{j,k} \realz_{j,k+n}]&=\E[ \realz_{j,k}\cdot \Re(\rho_j^{n}\exp(i\omega_jn)\z_{j,k})]\\
&=\rho_j^{n}\E[ \realz_{j,k} \realz_{j,k}\cos \omega_jn]\\
&=\rho_j^{n}\cos \omega_jn\cdot \E[\{\realz_{j,k} \}^2]\\
&=\rho_j^{n}\sigma_j^2\cos w_jn,\\
\end{split}
\end{equation}
where $\Re(\cdot)$ denotes the operator that extracts the real part of the complex argument and we used the fact that $\E[\realz_{j,k}\imz_{j,k}]=0$. The spectra for the $j^{\text{th}}$ component,  $S_j(\omega)$ can be written as
\begin{equation}
\begin{split}
&S_j(\omega)=\sum_{n=-\infty}^{\infty}\E\left[ \realz_{j,k} \realz_{j,k+n}\right]\exp\left(-i\omega n\right)\\
&=\sum_{n=-\infty}^{\infty}\rho_j^{n}\sigma_j^2\cos w_jn\exp\left(-i\omega n\right) \\
&=\sigma_j^2\sum_{n=-\infty}^{\infty}\rho_j^{n}\left\{\exp(i\omega_jn)+\exp(-i\omega_jn) \right\}\exp\left(-i\omega n\right)\\
&=\sigma_j^2\sum_{n=-\infty}^{\infty}\rho_j^{n}\exp(-i(\omega\pm \omega_j)n).\\
\end{split}
\end{equation}
Unpacking the infinite sum for one of the terms,
\begin{equation}
\begin{split}
&\sum_{n=-\infty}^{\infty}\rho_j^n\exp(-i(\omega - \omega_j)n)\\
&=1+\sum_{n=1}^{\infty}\rho_j^{n}\exp(-i(\omega-\omega_j)n)+\rho_j^{n}\exp(i(\omega-\omega_j)n)\\
&=1+\frac{\rho_j\exp(-i(\omega- \omega_j))}{1-\rho_j\exp(-i(\omega- \omega_j))}+\frac{\rho_j\exp(i(\omega- \omega_j))}{1-\rho_j\exp(i(\omega- \omega_j))}\\
&=1+\frac{2\rho_j\cos (\omega-\omega_j)-2\rho_j^2}{\left(1-\rho_j\exp(-i(\omega - \omega_j))\right)\left(1-\rho_j\exp(i(\omega - \omega_j)) \right)}\\
&=\frac{1-\rho_j^2}{1+\rho_j^2-2\rho_j\cos(\omega-\omega_j)}.\\
\end{split}
\end{equation}
Using the relation $\rho_j=\exp(-\Delta/l_j)$ and unpacking the infinite sum for the other term, we have
\begin{equation*}
\begin{split}
&S_j(w)\\
&=\frac{\sigma_j^2(1-\exp(-2\Delta/l_j))}{1+\exp(-2\Delta/l_j)-2\exp(-\Delta/l_j)\cos(\omega-\omega_j)}\\
&\quad +\frac{\sigma_j^2(1-\exp(-2\Delta/l_j))}{1+\exp(-2\Delta/l_j)-2\exp(-\Delta/l_j)\cos(\omega+\omega_j)}.\\
\end{split}
\end{equation*}
Since Fourier transform is a linear operator, we can conclude that $\gamma(\omega)=\sigma_{\nu}^2+\sum_{j=1}^JS_j(\omega)$.

\section*{C. Proof for Proposition 1 \textit{(Section 3.2.1)}}
\textbf{Proposition 1.} For a given $m$, as $\Delta\rightarrow 0$, the samples on either side of the interval boundary, which are $\zvec_{j,(m+1)N}$ and $\zvec_{j,(m+1)N+1}$, converge to each other in mean square,
	\begin{equation*}
	\lim_{\Delta\rightarrow 0}\E[\Delta \zvec_{j, (m+1)N}\Delta \zvec_{j, (m+1)N}^{\text{T}}]=0,
	\end{equation*}
	where we use $\Delta \zvec_{j, (m+1)N}=\zvec_{j, (m+1)N+1}-\zvec_{j, (m+1)N}$.
\begin{proof} To analyze Eq.~\ref{eq:tfgpss2} in the limit of $\Delta\rightarrow 0$, we use the equivalent continuous model (Eq.~\ref{eq:matexp}). It suffices to show that $\lim_{\Delta\rightarrow 0}\exp(\mathbf{F}\Delta)=\I_{2\times 2}$ and $\lim_{\Delta\rightarrow 0}\E[\varepsilon_{j, (m+1)N+1}\varepsilon_{j, (m+1)N+1}^{\text{T}} ]=\mathbf{0}$. We have,
\begin{equation*}
\begin{split}
&\lim_{\Delta\rightarrow 0}\exp(\mathbf{F}\Delta)=\I_{2\times 2}+\lim_{\Delta\rightarrow 0 }\sum_{k=1}^{\infty}\frac{\Delta^k}{k!}\mathbf{F}^k=\I_{2\times 2}\\
&\lim_{\Delta\rightarrow 0}\E[\varepsilon_{j, (m+1)N+1}\varepsilon_{j, (m+1)N+1}^{\text{T}}  ]/\sigma_{j,m+1}^2\\
&=\lim_{\Delta\rightarrow 0}\int_0^{\Delta}\exp\left(\mathbf{F}(\Delta-\tau)\right)\exp\left(\mathbf{F}(\Delta-\tau)\right)^{\text{T}}d\tau=\mathbf{0}.\\
\end{split}
\end{equation*}
Since this implies $\lim_{\Delta\rightarrow 0}\E[\Delta \zvec_{j, (m+1)N}\Delta \zvec_{j, (m+1)N}^{\text{T}}]=0$, we have convergence in mean square.
\end{proof}

\section*{D. Proof for Proposition 2 \textit{(Section 3.2.2)}}
\textbf{Proposition 2.} Assume $l_j \ll N\Delta$, such that $\var_{m,N}^j=\var_{m,\infty}^{j}$. In Eq.~\ref{eq:tfgpss2}, the difference between $\var_{m,\infty}^{j}=\sigma_{j,m}^2\I_{2\times 2}$ and $\var_{m+1,\infty}^{j}=\sigma_{j,m+1}^2\I_{2\times 2}$ decays exponentially fast as a function of $n$,	for $1\leq n \leq N$,
\begin{equation*}\label{eq:steady_state_induction}
\begin{split}
\var_{m+1,n}^j&=\var_{m+1,\infty}^{j} + \exp\left(-2n\Delta/l_j\right)( \var_{m,\infty}^{j}-\var_{m+1,\infty}^{j}).\\
\end{split}
\end{equation*}

\begin{proof}
	We first obtain the steady-state covariance $\var_{m,\infty}^{j}$, similar to \textbf{Appendix B}. Since we assume $\var_{1,1}^j=\sigma_{j,1}^2\I_{2\times 2}$, we can show that $\forall m,n$, $\var_{m,n}^j$ is a diagonal matrix, noting that $\mathbf{R}(\omega_j)\mathbf{R}^{\text{T}}(\omega_j)=\I_{2\times 2}$. Denoting $\var_{m,\infty}^{j}=\alpha\I_{2\times 2}$, we now use the discrete Lyapunov equation
	\begin{equation}
	\begin{split}
	&\var_{m,\infty}^{j} = \exp(-2\Delta/l_j)\mathbf{R}(\omega_j)\var_{m,\infty}^{j}\mathbf{R}^{\text{T}}(\omega_j)\\
	&\quad+\sigma_{j,m}^2\left(1-\exp\left(-2\Delta/l_j\right)\right)\I_{2\times 2}\\
	&\Rightarrow \alpha = \exp(-2\Delta/l_j)\alpha+\sigma_{j,m}^2\left(1-\exp\left(-2\Delta/l_j\right)\right)\\
	&\Rightarrow \var_{m,\infty}^{j} =\sigma_{j,m}^2\I_{2\times 2}.\\
	\end{split}
	\end{equation}
	We now prove the proposition by induction. For fixed $j$ and $m$, and for $n=1$,
	\begin{equation*}
	\begin{split}
	&\var_{m+1,1}^j\\
	&=\exp(-2\Delta/l_j)\mathbf{R}(\omega_j)\var_{m,N}^{j}\mathbf{R}^{\text{T}}(\omega_j)\\
	&\quad+\sigma_{j,m+1}^2\left(1-\exp\left(-2\Delta/l_j\right)\right)\I_{2\times 2}\\
	&=\left\{\sigma_{j,m+1}^2+\exp\left(-2\Delta/l_j\right)\left(\sigma_{j,m}^2-\sigma_{j,m+1}^2\right)\right\}\I_{2\times 2}.\\
	\end{split}
	\end{equation*}
	Assuming the same holds for $n=n'-1$, we have for $n=n'$,
	\begin{equation*}
	\begin{split}
	&\var_{m+1,n'}^{j}\\
	&=\exp(-2\Delta/l_j)\mathbf{R}(\omega_j)\var_{m,n'-1}^{j}\mathbf{R}^{\text{T}}(\omega_j)\\
	&\quad+\sigma_{j,m+1}^2\left(1-\exp\left(-2\Delta/l_j\right)\right)\I_{2\times 2}\\
	&=\exp(-2\Delta/l_j)\sigma_{j,m+1}^2\I_{2\times 2}\\
	&\quad+\exp\left(-2n'\Delta/l_j\right)\left(\sigma_{j,m}^2-\sigma_{j,m+1}^2\right)\I_{2\times 2}\\
	&\quad+\sigma_{j,m+1}^2\left(1-\exp\left(-2\Delta/l_j\right)\right)\I_{2\times 2}\\
	&=\left\{\sigma_{j,m+1}^2+\exp\left(-2n'\Delta/l_j\right)\left(\sigma_{j,m}^2-\sigma_{j,m+1}^2\right)\right\}\I_{2\times 2}.\\
	\end{split}
	\end{equation*}
	By the principle of induction, Eq.~\ref{eq:steady_state_induction} holds for $1\leq n \leq N$.
\end{proof}

\section*{E. Initialization \& Estimation for PLSO \textit{(Section 4)}}
\subsection*{E.1 Initialization}
As noted in the main text, we use AIC to determine the optimal number of $J$. For a given number of components $J$, we first construct the spectrogram of the data using STFT and identify the frequency bands with prominent power, i.e., frequency bands whose average power exceeds pre-determined threshold. The center frequencies of these bands serve as the initial center frequencies $\{\omega_{j}^{\text{init}}\}_j$, which are either fixed throughout the algorithm or further refined through the estimation algorithm in the main text. If $J$ exceeds the number of identified frequency bands from the spectrogram, 1) we first place $\{\omega_j^{\text{init}}\}_j$ in the prominent frequency bands and 2) we then place the remaining components uniformly spread out in $[0,\omega_c]$, where $\omega_c$ is a cutoff frequency to be further determined in the next section. As for $\{l_j^{\text{init}}\}_j$, we set it to be a certain fraction of the corresponding $\{\omega_{j}^{\text{init}}\}_j$. We then fit $\sigset$ and $\theta$ with $\lambda=0$, through the procedure explained in Stage 1. We finally use these estimates as initial values for other values of $\lambda$. 

\subsection*{E.2 Estimation for $\sigma_{\nu}^2$}
There are two possible ways to estimate the observation noise variance $\sigma_{\nu}^2$. The first approach is to perform maximum likelihood estimation of $f(\sigset;\theta)$ with respect to $\sigma_{\nu}^2$. The second approach, which we found to work \textit{better} in practice and use throughout the manuscript, is to directly estimate it from the Fourier transform of the data. Given a cutoff frequency $\omega_c$, informed by domain knowledge, we take the average power of the Fourier transform of $\y$ in $[\omega_c, f_s/2]$. For instance, it is widely known that the spectral content below 40 Hz in anesthesia EEG dataset is physiologically relevant and we use $\omega_c\simeq 40$ Hz.

\section*{F. Optimization for $\sigset$ via proximal gradient update \textit{(Section 4.1.1)}}
We discuss the algorithm to obtain a local optimal solution of $\sigset$ to the MAP optimization problem in Eq.~\ref{eq:penalized_whittle}. We define $\pow_{j,m}=\log\sigma_{j,m}^2$ and $\Pow=[\pow_{1,1},\ldots,\pow_{1,M},\ldots,\pow_{J,M}]\in\mathbb{R}^{JM}$ for notational decluttering, to be used in this section.

We rewrite Eq.~\ref{eq:penalized_whittle} as
\begin{equation}
\begin{split}
&\min_{\Pow} \underbrace{-\log p(\Pow|\y,\theta)}_{h(\Pow;\theta)}\\
&=\min_{\Pow} \underbrace{\frac{1}{2}\sum_{m=1}^M\sum_{n=1}^N\left\{\log\gamma^{(m)}(\omega_n)+\frac{I^{(m)}(\omega_n)}{\gamma^{(m)}(\omega_n)}\right\}}_{-f(\Pow;\theta)}\\
&\quad\quad\quad+\underbrace{\frac{\lambda}{2}\sum_{j=1}^J\sum_{m=1}^M \left(\pow_{j,m} - \pow_{j,m-1} \right)^2}_{-g(\Pow;\theta)}\\
&=\min_{\Pow}-f(\Pow;\theta)-g(\Pow;\theta).\\
\end{split}
\end{equation}
The algorithm is described in Algorithm~\ref{alg:apg}. It follows the steps outlined in the inexact accelerated proximal gradient algorithm~\citep{Li15}. For faster convergence, we use larger step sizes with the Barzilai-Borwein (BB) step size initialization rule~\citep{barzilai88}. For rest of this section, we drop dependence on $\theta$. The main novelty of our algorithm is the proximal gradient update
\begin{equation}\label{eq:proximal2}
\begin{split}
&\mathbf{u}^{(l+1)}\\
&=\operatorname{prox}_{-\alpha^{(l)}_{\mathbf{w}} g}( \mathbf{w}^{(l)} + \alpha_{\mathbf{w}}^{(l)}\nabla f(\mathbf{w}^{(l)}) )\\
&=\argmin_{\Pow}\frac{1}{2\alpha_{\mathbf{w}}^{(l)}}\lVert\Pow-(\mathbf{w}^{(l)} + \alpha_{\mathbf{w}}^{(l)}\nabla f(\mathbf{w}^{(l)}))\rVert^2-g(\Pow)\\
&=\argmin_{\Pow}\sum_{j=1}^{J} \sum_{m=1}^M \frac{\left( \left( \mathbf{w}_{j,m}^{(l)} + \alpha_{\mathbf{w}}^{(l)}\cdot\frac{\partial f(\mathbf{w}^{(l)})}{\partial \mathbf{w}_{j,m}} \right)- \pow_{j,m}\right)^2}{2\alpha_{\mathbf{w}}^{(l)}}\\
&\quad\quad\quad\quad\quad\quad\quad+\frac{\lambda}{2}(\pow_{j,m}-\pow_{j,m-1})^2,\\
\end{split}
\end{equation}
where the same holds for $\mathbf{x}^{(l+1)}=\operatorname{prox}_{-\alpha_{\Pow}^{(l)}g}( \Pow^{(l)}+\alpha_{\Pow}^{(l)}\nabla f(\Pow^{(l)}))$. The auxiliary variables $\mathbf{w},\mathbf{u},\mathbf{x}\in\mathbb{R}^{JM}$ ensure convergence of $\Pow$. We use $\mathbf{w}^{(l)}_{j,m}$ to denote $\left((m-1)J+j\right)^{\text{th}}$ entry of $\mathbf{w}^{(l)}$. As mentioned in the main text, this can be solved in a computationally efficient manner by using Kalman filter/smoother. 

\begin{algorithm}
	\SetAlgoLined
	\SetKwRepeat{Repeat}{repeat}{until}
	\KwResult{$\widehat{\Pow}$}
	\textbf{Initialize} $\Pow^{(0)}=\Pow^{(1)}=\mathbf{u}^{(1)}$, $\beta^{(0)}=0,\,\beta^{(1)}=1,\,\delta>0,\,\rho<1$ \\
	\For{$l\leftarrow 1$ \KwTo $L$}{
		$\mathbf{w}^{(l)}=\Pow^{(l)}+\frac{\beta^{(l-1)}}{\beta^{(l)}}(\mathbf{u}^{(l)}-\Pow^{(l)})+\frac{\beta^{(l-1)}-1}{\beta^{(l)}}(\Pow^{(l)}-\Pow^{(l-1)})$\\
		\textbf{(BB step size initialization rule)}\\
		$\mathbf{s}^{(l)}= \mathbf{u}^{(l)}-\mathbf{w}^{(l-1)}$, $\mathbf{r}^{(l)}=-\nabla f(\mathbf{u}^{(l)})+\nabla f(\mathbf{w}^{(l-1)})$\\
		$\alpha_{\mathbf{w}}^{(l)}= ((\mathbf{s}^{(l)})^{\text{T}}\mathbf{s}^{(l)})/((\mathbf{s}^{(l)})^{\text{T}}\mathbf{r}^{(l)})$\\
		$\mathbf{s}^{(l)}= \mathbf{x}^{(l)}-\Pow^{(l-1)}$, $\mathbf{r}^{(l)}=-\nabla f(\mathbf{x}^{(l)})+\nabla f(\Pow^{(l-1)})$	\\
		$\alpha_{\Pow}^{(l)}= ((\mathbf{s}^{(l)})^{\text{T}}\mathbf{s}^{(l)})/((\mathbf{s}^{(l)})^{\text{T}}\mathbf{r}^{(l)})$\\
		\textbf{(Proximal update step)}\\
		\Repeat{$h(\mathbf{u}^{(l+1)})\leq h(\mathbf{w}^{(l)})-\delta\lVert\mathbf{u}^{(l+1)}- \mathbf{w}^{(l)} \rVert^2$}{
			\quad$\mathbf{u}^{(l+1)}=\operatorname{prox}_{-\alpha_{\mathbf{w}}^{(l)}g}( \mathbf{w}^{(l)} + \alpha_{\mathbf{w}}^{(l)}\nabla f(\mathbf{w}^{(l)}))$\\
			\quad$\alpha_{\mathbf{w}}^{(l)}=\rho\cdot\alpha_{\mathbf{w}}^{(l)}$\\
		}
		\Repeat{$h(\mathbf{x}^{(l+1)})\leq h(\Pow^{(l)})-\delta\lVert\mathbf{x}^{(l+1)}- \Pow^{(l)} \rVert^2$}{
			\quad$\mathbf{x}^{(l+1)}=\operatorname{prox}_{-\alpha_{\Pow}^{(l)}g}( \Pow^{(l)}+\alpha_{\Pow}^{(l)}\nabla f(\Pow^{(l)}))$\\
			\quad $\alpha_{\Pow}^{(l)}=\rho\cdot\alpha_{\Pow}^{(l)}$\\
		}
		$\beta^{(l+1)}=\frac{1 + \sqrt{4\left(\beta^{(l)} \right)^2+1}}{2}$\\
		$\Pow^{(l+1)}=\begin{cases}
		\mathbf{u}^{(l+1)}&\text{if }h(\textbf{u}^{(l+1)})\leq h(\textbf{x}^{(l+1)})\\
		\mathbf{x}^{(l+1)}& \text{otherwise }\\
		\end{cases}$
	}
	$\widehat{\Pow}=\Pow^{(L)}$
	\caption{Inference for $\Pow$ via inexact APG}
	\label{alg:apg}
\end{algorithm}

\newpage

\section*{G. Lipschitz constant for $\nabla f(\sigset;\theta)$ \textit{(Section 4.1.1)}}
In this section, we prove that under some assumptions, we can show that the log-likelihood $f(\sigset;\theta)$ has Lipschitz continuous gradient with the Lipschitz constant $C$. As in the previous section, we use $\pow_{j,m}=\log\sigma_{j,m}^2$ and $\Pow=[\pow_{1,1},\ldots,\pow_{1,M},\ldots,\pow_{J,M}]\in\mathbb{R}^{JM}$. 

Let us start by restating the definition of Lipschitz continuous gradient.

\textbf{Definition} A continuously differentiable function $f:\mathcal{S}\rightarrow \mathbb{R}$ is Lipschitz continuous gradient if
\begin{equation}
\lVert\nabla f(\mathbf{x})-\nabla f(\mathbf{y})\rVert_2 \leq C\lVert \textbf{x} - \textbf{y}\rVert_2\quad\text{for every } \mathbf{x},\mathbf{y}\in\mathcal{S},
\end{equation}
where $\mathcal{S}$ is a compact subset of $\mathbb{R}^{JM}$ and $C>0$ is the Lipschitz constant. 

Our goal is to find the constant $C$ for the Whittle likelihood $f(\Pow;\theta)$
\begin{equation}
\begin{split}
f(\Pow;\theta)&=-\frac{1}{2}\sum_{m=1}^M\sum_{n=1}^N\left\{\log\gamma_{m,n}+\frac{I_{m,n}}{\gamma_{m,n}}\right\}\\
&=-\frac{1}{2}\underbrace{\sum_{m=1}^M\sum_{n=1}^N\log\left(\sigma_{\nu}^2+\sum_{j=1}^J \exp\left(\pow_{j,m}\right)\alpha_{j,n} \right)}_{f_{1}(\Pow;\theta)}\\
&\quad-\frac{1}{2}\underbrace{\sum_{m=1}^M\sum_{n=1}^N\frac{I_{m,n}}{\left(\sigma_{\nu}^2+\sum_{j=1}^J \exp\left(\pow_{j,m}\right)\alpha_{j,n} \right)}}_{f_{2}(\Pow;\theta)},\\
\end{split}
\end{equation}
where we use $\gamma_{m,n}=\gamma^{(m)}(\omega_n)$ and $I_{m,n}=I^{(m)}(\omega_n)$ for notational simplicity and
\begin{equation*}
\alpha_{j,n}=\frac{\left(1 - \exp\left(-2\Delta/l_j\right)\right)}{1+\exp\left(-2\Delta/l_j\right)-2\exp\left(-\Delta/l_j\right)\cos(\omega_n-\omega_j)}.
\end{equation*}

We make the following assumptions
\begin{enumerate}
	\item $I_{m,n}$ is bounded, i.e., $I_{m,n}\leq C_I$. With the real-world signal, we can reasonably assume that $I_{m,n}$ or energy of the signal is bounded.
	\item $\forall j,m$, $\pow_{j,m}$ is bounded, i.e., $|\pow_{j,m}|\leq \log C_{\pow}$ for some $C_{\pow}>1$. This implies $1/C_{\pow}\leq \exp(\pow_{j,m})\leq C_{\pow}$. 
\end{enumerate}

In addition, we have the following facts
\begin{enumerate}
	\item $I_{m,n}, \alpha_{j,n}$, and $\gamma_{m,n}$ are nonnegative.
	\item $I_{m,n}$ and $\gamma_{m,n}$ are bounded. This follows from the aforementioned assumptions.
	\item For given $\{l_j\}_j$, we have bounded $\alpha_{j,n}$. To see this, note that the maximum of $\alpha_{j,n}$ is acheived at $\omega_n=\omega_j$,
	\begin{equation}
	\begin{split}
	\max \alpha_{j,n}&=\frac{\left(1 - \exp\left(-2\Delta/l_j\right)\right)}{1+\exp\left(-2\Delta/l_j\right)-2\exp\left(-\Delta/l_j\right)}\\
	&=\frac{\left(1+\exp(-\Delta/l_j) \right)}{\left(1-\exp(-\Delta/l_j) \right)}.\\
	\end{split}
	\end{equation}
	Therefore, denoting $l_{\operatorname{max}}=\max_j \{l_j\}_j$, 
	\begin{equation}
	\max \alpha_{j,n}\leq \frac{\left(1+\exp(-\Delta/l_{\operatorname{max}}) \right)}{\left(1-\exp(-\Delta/l_{\operatorname{max}}) \right)}=C_{\alpha}.
	\end{equation}
\end{enumerate}
Finally, we define $\mathcal{S}=[-\log C_{\pow}, \log C_{\pow}]\subset \mathbb{R}^{JM}$.

We want to compute the Lipschitz constant for $\nabla f_1(\Pow)$ and $\nabla f_2(\Pow)$ for $\Pow,\overline{\Pow}\in \mathcal{S}$, i.e.,
\begin{equation}
\begin{cases}
\lVert\nabla f_1(\Pow)-\nabla f_1(\overline{\Pow})\rVert_2\leq C_1\lVert\Pow-\overline{\Pow}\rVert_2\\
\lVert \nabla f_2(\Pow)-\nabla f_2(\overline{\Pow})\rVert_2\leq C_2\lVert\Pow-\overline{\Pow}\rVert_2,\\
\end{cases}
\end{equation}
where we dropped dependence on $\theta$ for notational simplicity. Consequently, the triangle inequality yields
\begin{equation}
\begin{split}
&\lVert \nabla f(\Pow)-\nabla f(\overline{\Pow})\rVert_2\\
&\leq \lVert\nabla f_1(\Pow)-\nabla f_1(\overline{\Pow})\rVert_2+\lVert\nabla f_2(\Pow)-\nabla f_2(\overline{\Pow})\rVert_2\\
&\leq (C_1+C_2)\lVert\Pow-\overline{\Pow}\rVert_2.
\end{split}
\end{equation}

\subsection*{Lipschitz constant $C_1$ for $\nabla f_1(\Pow)$}
Let us examine $\first$ first. The derivative with respect to $\pow_{j,m}$ is given as
\begin{equation}
\begin{split}
\frac{\partial \first}{\partial \pow_{j,m}}&=\sum_{n=1}^N \alpha_{j,n}\cdot\underbrace{\frac{\exp\left(\pow_{j,m}\right)}{ \sigma_{\nu}^2+\sum_{j'=1}^J \exp\left(\pow_{j',m}\right)\alpha_{j',n} }}_{\widetilde{f}(\pow_{j,m})}\\
&=\sum_{n=1}^N \alpha_{j,n}\widetilde{f}(\pow_{j,m}).
\end{split}
\end{equation}
We now have
\begin{equation}
\left| \frac{\partial f_1(\Pow)}{\partial \pow_{j,m}} - \frac{\partial f_1(\overline{\Pow})}{\partial \pow_{j,m}}\right|=\sum_{n=1}^N|\alpha_{j,n}|\cdot\left| \widetilde{f}\left(\pow_{j,m}\right)- \widetilde{f}(\overline{\pow}_{j,m})\right|.
\end{equation}
Without loss of generality, we assume $\pow_{j,m}\geq \overline{\pow}_{j,m}$. We now apply the mean value theorem (MVT) to $\widetilde{f}(\pow_{j,m})$
\begin{equation}\label{eq:MVT}
\widetilde{f}(\pow_{j,m}) - \widetilde{f}(\overline{\pow}_{j,m})=\widetilde{f}'(\pow'_{j,m})(\pow_{j,m}-\overline{\pow}_{j,m}),
\end{equation} 
where $\pow'_{j,m}\in[\overline{\pow}_{j,m},\pow_{j,m}]$.We can compute and bound $\widetilde{f}'(\pow_{j,m}')=d \widetilde{f}(\pow_{j,m}')/d\pow_{j,m}'$ as follows
\begin{equation}\label{eq:derivative}
\begin{split}
\widetilde{f}'(\pow_{j,m}')&=\frac{\exp(\pow_{j,m}')}{ \sigma_{\nu}^2+\sum_{j'=1}^J \exp(\pow_{j',m}')\alpha_{j',n} }\\
&\quad - \frac{\alpha_{j,n}\exp^2(\pow_{j,m}')}{( \sigma_{\nu}^2+\sum_{j'=1}^J \exp(\pow_{j',m}')\alpha_{j',n})^2}\\
&=\frac{\exp(\pow_{j,m}')}{ \sigma_{\nu}^2+\sum_{j'=1}^J \exp(\pow_{j',m}')\alpha_{j',n} }\\
&\quad\times\underbrace{\left(1-\frac{\alpha_{j,n}\exp(\pow_{j,m}')}{ \sigma_{\nu}^2+\sum_{j'=1}^J \exp(\pow_{j',m}')\alpha_{j',n} }\right)}_{\leq 1}\\
&\leq \frac{\exp(\pow'_{j,m})}{\sigma_{\nu}^2+\sum_{j'=1}^J \exp(\pow'_{j',m})\alpha_{j',n}}\\
&\leq\frac{C_{\pow}}{\sigma_{\nu}^2}.\\
\end{split}
\end{equation}
Combining Eq.~\ref{eq:MVT} and \ref{eq:derivative}, we have
\begin{equation}
\begin{split}
&\sum_{n=1}^N|\alpha_{j,n}|\cdot\left| \widetilde{f}\left(\pow_{j,m}\right)- \widetilde{f}(\overline{\pow}_{j,m})\right|\\
&=\sum_{n=1}^N|\alpha_{j,n}|\cdot\left|\widetilde{f}'(\pow'_{j,m})(\pow_{j,m}-\overline{\pow}_{j,m})\right|\\
&\leq \frac{NC_{\alpha}C_{\pow}}{\sigma_{\nu}^2}\left| \pow_{j,m}-\overline{\pow}_{j,m}\right|.\\
\end{split}
\end{equation}

We thus have,
\begin{equation}
\begin{split}
\lVert\nabla f_1(\Pow)-\nabla f_1(\overline{\Pow})\rVert_2^2&=\sum_{j=1}^J\sum_{m=1}^M\left( \frac{\partial f_1(\Pow)}{\partial \pow_{j,m}} - \frac{\partial f_1(\overline{\Pow})}{\partial \pow_{j,m}}\right)^2\\
&\leq \left(\frac{JMNC_{\alpha}C_{\pow}}{\sigma_{\nu}^2}\right)^2\lVert\Pow-\overline{\Pow}\rVert^2_2.
\end{split}
\end{equation}

\subsection*{Lipschitz constant $C_2$ for $\nabla f_2(\Pow)$}
Computing $C_2$ proceeds in a similar manner to computing $C_1$. The derivative with respect to $\pow_{j,m}$ is given as 
\begin{equation}
\begin{split}
&\frac{\partial \second}{\partial \pow_{j,m}}\\
&= -\sum_{n=1}^N I_{m,n}\alpha_{j,n}\cdot\underbrace{\frac{\exp(\pow_{j,m})}{(\sigma_{\nu}^2+\sum_{j'=1}^J \exp(\pow_{j',m})\alpha_{j',n})^2}}_{\widetilde{f}(\pow_{j,m})}.\\
\end{split}
\end{equation}
We now have
\begin{equation}
\begin{split}
&\left| \frac{\partial f_2(\Pow)}{\partial \pow_{j,m}} - \frac{\partial f_2(\overline{\Pow})}{\partial \pow_{j,m}}\right|\\
&=\sum_{n=1}^N\left|I_{m,n}\alpha_{j,n}\right|\cdot\left| -\widetilde{f}(\pow_{j,m})+ \widetilde{f}(\overline{\pow}_{j,m})\right|.\\
\end{split}
\end{equation}
Without loss of generality, assume $\pow_{j,m}\geq \overline{\pow}_{j,m}$. To apply MVT, we need to compute and bound $\widetilde{f}'(\pow'_{j,m})$
\begin{equation}\label{eq:f2_derivative}
\begin{split}
\widetilde{f}'(\pow_{j,m}')&=\frac{\exp(\pow_{j,m}')}{( \sigma_{\nu}^2+\sum_{j'=1}^J \exp(\pow_{j',m}')\alpha_{j',n})^2 }\\
&\quad - \frac{2\alpha_{j,n}\exp^2(\pow_{j,m}')}{( \sigma_{\nu}^2+\sum_{j'=1}^J \exp(\pow_{j',m}')\alpha_{j',n})^3}\\
&=\frac{\exp(\pow_{j,m}')}{( \sigma_{\nu}^2+\sum_{j'=1}^J \exp(\pow_{j',m}')\alpha_{j',n} )^2}\\
&\quad\times\underbrace{\left(1-\frac{2\alpha_{j,n}\exp(\pow_{j,m}')}{ \sigma_{\nu}^2+\sum_{j'=1}^J \exp(\pow_{j',m}')\alpha_{j',n} }\right)}_{\leq 1}\\
&\leq \frac{\exp(\pow'_{j,m})}{(\sigma_{\nu}^2+\sum_{j'=1}^J \exp(\pow'_{j',m})\alpha_{j',n})^2}\\
&\leq\frac{C_{\pow}}{\sigma_{\nu}^4}.\\
\end{split}
\end{equation}
Applying MVT,
\begin{equation}
\begin{split}
&\sum_{n=1}^N|I_{m,n}\alpha_{j,n}|\cdot\left| -\widetilde{f}\left(\pow_{j,m}\right)+ \widetilde{f}\left(\overline{\pow}_{j,m}\right)\right|\\
&=\sum_{n=1}^N|I_{m,n}\alpha_{j,n}|\cdot\left|\widetilde{f}'(\pow'_{j,m})(\pow_{j,m}-\overline{\pow}_{j,m})\right|\\
&\leq \frac{NC_IC_{\alpha}C_{\pow}}{\sigma_{\nu}^4}\left| \pow_{j,m}-\overline{\pow}_{j,m}\right|.\\
\end{split}
\end{equation}


We thus have,
\begin{equation}
\begin{split}
&\lVert\nabla f_2(\Pow)-\nabla f_2(\overline{\Pow})\rVert_2^2\\
&=\sum_{j=1}^J\sum_{m=1}^M\left( \frac{\partial f_2(\Pow)}{\partial \pow_{j,m}} - \frac{\partial f_2(\overline{\Pow})}{\partial \pow_{j,m}}\right)^2\\
&\leq \left(\frac{JMNC_{\alpha}C_{\pow}C_I}{\sigma_{\nu}^4}\right)^2\lVert\Pow-\overline{\Pow}\rVert^2_2.
\end{split}
\end{equation}

Collecting the Lipschitz constants $C_1$ and $C_2$, we finally have
\begin{equation}
\lVert \nabla f(\Pow)-\nabla f(\overline{\Pow})\rVert_2\leq \underbrace{\frac{JMNC_{\alpha}C_{\pow}}{\sigma_{\nu}^2}\left(1+\frac{C_I}{\sigma_{\nu}^2}\right)}_{C}\lVert\Pow-\overline{\Pow}\rVert_2.
\end{equation}

\section*{H. Inference with $p\left(\left\{\z_j\right\}_j\mid\sigset,\y,\theta\right)$ \textit{(Section 4.2)}}

We present the details for performing inference with $p(\left\{\z_j\right\}_j\mid \{\widehat{\sigma}_{j,m}^2\}_{j,m},\y,\hat{\theta})$, given the estimates $\{\widehat{\sigma}_{j,m}^2\}_{j,m}$ and $\hat{\theta}$ from window-level inference. First, we present the Kalman filter/smoother algorithm to compute the mean posterior trajectory, and the credible interval. Next, we present the forward filtering backward sampling (FFBS) algorithm~\cite{Carter94, Lindsten13} to generate Monte Carlo (MC) sample trajectories.

First, we define additional notations.

1) $\zvec_{j,k|k'}=\E[\zvec_{j,k} \mid \{\widehat{\sigma}_{j,m}^2\}_{j,m}, \y_{1:k'}, \hat{\theta} ]\in\mathbb{R}^2$ 

Posterior mean of $\zvec_{j,k}$. We are primarily concerned with the following three types: 1) $\zvec_{j,k|k-1}$, the one-step prediction estimate, 2) $\zvec_{j,k|k}$, the Kalman filter estimate, and 3) $\zvec_{j,k|MN}$, the Kalman smoother estimate.
	
2) $\zvec_{k|k'}=[(\zvec_{1,k|k'})^{\text{T}},\ldots, (\zvec_{J,k|k'})^{\text{T}}]^{\text{T}} \in\mathbb{R}^{2J}$ 

A collection of $\{\zvec_{j,k|k'} \}_j$ in a single vector.	

3) $\var_{j,k|k'}=\E[\left(\zvec_{j,k} - \zvec_{j,k|k'}\right)\left(\zvec_{j,k} - \zvec_{j,k|k'}\right)^{\text{T}}\mid \{\widehat{\sigma}_{j,m}^2\}_{j,m}, \y_{1:k'}, \hat{\theta}  ]\in\mathbb{R}^{2\times 2}$

Posterior covariance of $\zvec_{j,k}$. Just as in $\zvec_{j,k|k'}$, we are interested in three types, i.e., $\var_{j,k|k-1}$, $\var_{j, k|k}$, and $\var_{j,k|MN}$.
	
4) $\var_{k|k'}=\operatorname{blkdiag}\left( \var_{j,k|k'}\right)\in\mathbb{R}^{2J\times 2J}$ 

A block diagonal matrix of $J$ posterior covariance matrices.
	
5) $\mathbf{A}=\operatorname{blkdiag}\left(\exp\left(-\Delta/l_j\right)\mathbf{R}(\omega_j)\right)\in\mathbb{R}^{2J\times 2J}$

A block digonal transition matrix.
	
6) $\mathbf{H}=(1,0,\ldots,1,0)$ The observation gain. 

\subsection*{Kalman filter/smoother}
The Kalman filter equations are given as
\begin{equation}
\begin{split}
&\zvec_{j,mN+n|mN+(n-1)}\\ 
&= \exp\left(-\Delta/l_j\right)\mathbf{R}(\omega_j) \zvec_{j,mN+(n-1)|mN+(n-1)}\\ 
&\var_{j,mN+n|mN+(n-1)}\\
&=\exp\left(-2\Delta/l_j\right)\mathbf{R}(\omega_j)\var_{j,mN+(n-1)|mN+(n-1)}\mathbf{R}^{\text{T}}(\omega_j)\\
&\quad+\sigma_{j,m}^2\left(1-\exp\left(-2\Delta/l_j\right)\right)\\
&\K_{mN+n}\\
&=\var_{mN+n|mN+(n-1)}\mathbf{H}^{\text{T}}\left(\mathbf{H}\var_{mN+n|mN+(n-1)}\mathbf{H}^{\text{T}}+\sigma_{\nu}^2\right)^{-1}\\
&\zvec_{mN+n|mN+n}\\
&=\zvec_{mN+n|mN+(n-1)}\\ 
&\quad + \K_{mN+n}\left(\y_{mN+n}-\mathbf{H}\zvec_{mN+n|mN+(n-1)}\right)\\
&\var_{mN+n|mN+n}=\left(\I_{2J\times 2J}-\K_{mN+n}\mathbf{H}\right)\var_{mN+n|mN+(n-1)}.\\
\end{split}
\end{equation}
Subsequently, the Kalman smoother equations are given as
\begin{equation}
\begin{split}
&\mathbf{C}_{mN+n}\\
&=\var_{mN+n|mN+n}\mathbf{A}^{\text{T}}\var^{-1}_{mN+(n+1)|mN+n}\in\mathbb{R}^{2J\times 2J}\\
&\zvec_{mN+n|MN}\\
&= \zvec_{mN+n|mN+n}\\
&\quad+\mathbf{C}_{mN+n}\left(\zvec_{mN+(n+1)|MN} -\zvec_{mN+(n+1)|mN+n} \right)\\
&\var_{mN+n|MN}\\
&=\var_{mN+n|mN+n}\\
&\quad+\mathbf{C}_{mN+n}\var_{mN+(n+1)|MN}\mathbf{C}_{mN+n}^{\text{T}}.\\
&\quad-\mathbf{C}_{mN+n}\var_{mN+(n+1)|mN+n}\mathbf{C}_{mN+n}^{\text{T}}\\
\end{split}
\end{equation}
To obtain the mean reconstructed trajectory for the $j^{\text{th}}$ oscillatory component, $\{\widehat{\y}_{j,k}\}_k$, we take the real part of the $j^{\text{th}}$ component of the smoothed mean, $\widehat{\y}_{j,k}=\mathbf{e}_{2j-1}^{\text{T}}\cdot\zvec_{k|MN}$,  where $\mathbf{e}_{2j-1}\in\mathbb{R}^{2J}$ is a unit vector with the only non-zero value, equal to 1, at the entry $2j-1$.

The 95\% credible interval for $\widehat{\y}_{j,k}$, denoted as $\operatorname{CI}^{\text{lower}}_{j,k}$/$\operatorname{CI}^{\text{upper}}_{j,k}$ for the upper/lower end, respectively, is given as 
\begin{equation}
\begin{split}
\operatorname{CI}^{\text{upper}}_{j,k}&=\mathbf{e}_{2j-1}^{\text{T}}\cdot\zvec_{k|MN}+1.96\cdot \sqrt{\mathbf{e}_{2j-1}^{\text{T}}\var_{k|MN}\mathbf{e}_{2j-1}}\\
\operatorname{CI}^{\text{lower}}_{j,k}&=\mathbf{e}_{2j-1}^{\text{T}}\cdot\zvec_{k|MN}-1.96\cdot \sqrt{\mathbf{e}_{2j-1}^{\text{T}}\var_{k|MN}\mathbf{e}_{2j-1}}.\\
\end{split}
\end{equation}

\subsection*{FFBS algorithm for $p(\left\{\z_j\right\}_j\mid\sigset,\y,\theta)$}
To generate the MC trajectory samples from the posterior distribution $p(\left\{\z_j\right\}_j\mid \{\widehat{\sigma}_{j,m}^2\}_{j,m}, \y, \hat{\theta} )$, we use the FFBS algorithm. The steps are summarized in Algorithm~\ref{alg:ffbs}, which uses the Kalman estimates derived in the previous section. We denote $s=1,\ldots,S$ as the MC sample index.

\begin{algorithm}
	\SetAlgoLined
	\KwResult{$\left\{ \widetilde{\z}_k^{(s)} \right\}_{k,s}^{MN,S}$}
	\For{$s \leftarrow 1$ \KwTo $S$}{
		Sample $\widetilde{\z}^{(s)}_{MN}$ from $\mathcal{N}\left(\zvec_{MN|MN}, \var_{MN|MN}\right)$.\\
		\For{$k\leftarrow MN-1$ \KwTo $1$}{
			Sample $\widetilde{\z}^{(s)}_{k}$ from $\mathcal{N}(\widetilde{\mu}_k,\widetilde{\var}_k)$, where
			\begin{equation*}
			\begin{split}
			\widetilde{\mu}_k&=\zvec_{k|k}+\var_{k|k}\mathbf{A}^{\text{T}}\var_{k+1|k}^{-1}( \widetilde{\z}^{(s)}_{k+1}-\mathbf{A}\zvec_{k|k})\\
			\widetilde{\var}_k&=\var_{k|k}-\var_{k|k}\mathbf{A}^{\text{T}}\var_{k+1|k}^{-1}\mathbf{A}\var_{k|k}.\\
			\end{split}
			\end{equation*}
		}
	}
	\caption{FFBS algorithm}
	\label{alg:ffbs}
\end{algorithm}

\newpage

\section*{I. Computational efficiency of PLSO vs. GP-PS}
We show the runtime of PLSO and piecewise stationary GP (GP-PS) for inference of the mean trajectory of the hippocampus data ($fs = 1,250$ Hz, $J = 5$, 2-second window) for varying data lengths ($50$, $100$, $200$ seconds corresponding to
$K = 6.25\times 10^4,\, 1.25\times 10^5,\, 2.5\times 10^5$ sample points, respectively). 

As noted in Section~\ref{section:connection}, the computational complexity of PLSO is $O(J^2K)$, where as the computational complexity of GP-PS is $O(N^2K)$. Since $N$, the number of samples per window, is fixed ($2,500$ samples), we expect both PLSO and GP-PS to be linear in terms of the number of samples $K$. Table~\ref{table:efficiency} indeed confirms that this is the case. However, we observe that PLSO is much more efficient than GP-PS.
\begin{table}[!ht]
	\caption{Runtime (s) for PLSO and GP-PS for varying length }
	\label{table:efficiency}
	\centering
	\begin{tabular}[b]{|c|c|c|}
		\hline
		& \textbf{ PLSO}& \textbf{GP-PS}\\
		\hline
		$T=$50 & 1.7 & 346.8\\
		$T=$100 & 3.1 & 700.6\\
		$T=$200& 6.5 & 1334.0\\
		\hline
	\end{tabular}
\end{table}

\section*{J. Simulation experiment \textit{(Section 5.1)}}
We simulate from the following model for $1\leq k \leq K$
\begin{equation*}
\y_k = 10\left(\frac{K-k}{K} \right)\realz_{1,k} + 10\cos^4(2\pi \omega_0 k)\realz_{2,k}+\nu_k,
\end{equation*}
where $\z_{1,k}$ and $\z_{2,k}$ are from the PLSO stationary generative model, i.e., $\sigma_{j,m}^2=\sigma_j^2$. The parameters are $\omega_0 /\omega_1/\omega_2=0.04/1/10$ Hz, $f_s=200$ Hz, $T=100$ seconds, $l_1=l_2=1$, and $\nu_k\sim\N(0,25)$. This stationary process comprises two amplitude-modulated oscillations, namely one modulated by a slow-frequency ($\omega_0=0.04$ Hz) sinusoid and the other a linearly-increasing signal~\cite{Ba14}. We assume a 2-second PS interval. For PLSO, we use $J=2$ components and 5 block coordinate descent iterations for optimizing $\theta$ and $\sigset$.

\section*{K. Details of the TVAR model}
As explained in Section~\ref{section:connection}, the TVAR model is defined as 
\begin{equation*}
\y_k=\sum_{p=1}^P a_{p,k}\y_{k-p}+\varepsilon_k,
\end{equation*}	
which can alternatively be written as
\begin{equation*}
\begin{split}
&\begin{pmatrix}
\y_k\\
\vdots\\
\y_{k-P+1}\\
\end{pmatrix}\\
&=\underbrace{\begin{pmatrix}
a_{1,k}& a_{2,k} &\cdots & a_{P-1,k}&a_{P,k}\\
1 & 0 & \cdots & 0 & 0\\
0 & 1 & \cdots & 0 & 0\\
\vdots & & & &\vdots\\
& \vdots & &\\
0 & 0 & \cdots & 1 & 0\\
\end{pmatrix}}_{A_k}\begin{pmatrix}
\y_{k-1}\\
\vdots\\
\y_{k-P}\\
\end{pmatrix}+\varepsilon_k.\\
\end{split}
\end{equation*}	
It is the transition matrix $A_k$ that determines the oscillatory component profile at time $k$, such as the number of components and the center frequencies. Specifically, $\{A_k\}_k$ are first fit to the data $\y$ and then eigen-decomposition is performed on each of the estimated $\{A_k\}_k$. More in-depth technical details can be found in \citet{West1999}.

We use publicly available code for the TVAR implementation\footnote{https://www2.stat.duke.edu/~mw/mwsoftware/TVAR/index.html}. We use TVAR order of $p=70$, as the models with lower orders than this value did not capture the theta-band signal - The lowest frequency band in these cases were the gamma band ($>30$ Hz). Even with the higher orders of $p$, and various hyperparameter combinations, we observed that the slow and theta frequency band was still explained by a single oscillatory component. For the discount factor, we used $\beta=0.999$ to ensure that the TVAR coefficients and, consequently, the decomposed oscillatory components do not fluctuate much.

\section*{L. Anesthesia EEG dataset \textit{(Section 5.3)}}
We show spectral analysis results for the EEG data of a subject anesthetized with propofol (This is a different subject from the main text.) The data last $T=2{,}500$ seconds, sampled at $f_s=250$ Hz. We assume a 4-second PS interval, use $J=9$ components and 5 block coordinate descent iterations for optimizing $\theta$ and $\sigset$.

Fig.~\ref{fig:anesthesia_supp} shows the STFT and the PLSO-estimated spectrograms. As noted in the main text, PLSO with stationarity assumption is too restrictive and fails to capture the time-varying spectral pattern. Both PLSO with $\lambda=0$ and $\lambda=\lambda_{\operatorname{CV}}$ are more effective in capturing such patterns, with the latter able to remove the artifacts and better recover the smoother dynamics.

\begin{figure}[!htb]
	\centering
	\includegraphics[width=	0.9\linewidth]{./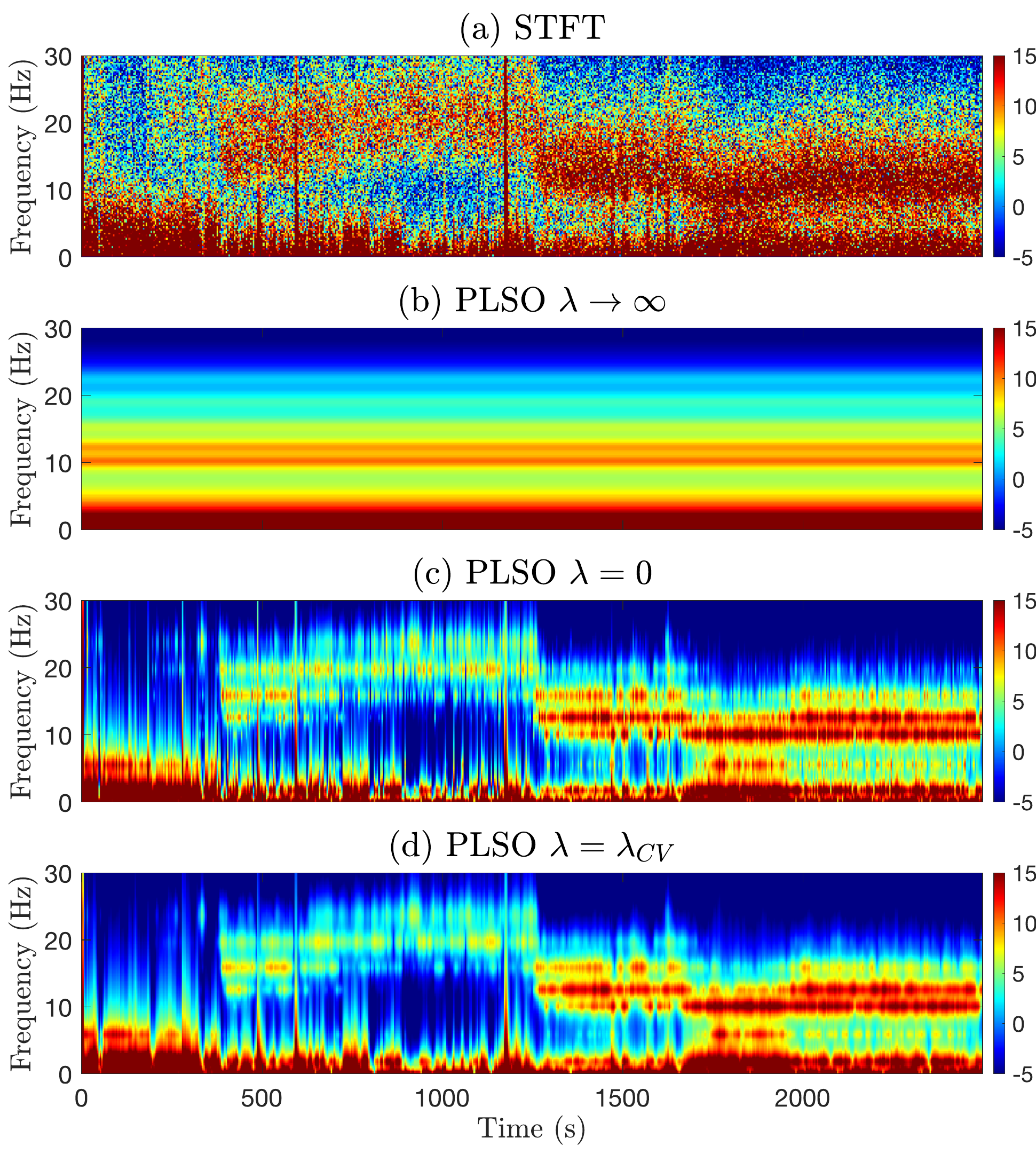}
	\caption{Spectrogram (in dB) under propofol anesthesia. (a) STFT of the data (b) PLSO with $\lambda\rightarrow \infty$ (c) PLSO with $\lambda=0$ (d) PLSO with $\lambda=\lambda_{\operatorname{CV}}$.}
	\label{fig:anesthesia_supp}
\end{figure}

\end{document}